\documentclass[conference]{IEEEtran}

\usepackage{amsmath,amssymb,amsthm}
\usepackage{mathtools}
\usepackage{enumerate}
\usepackage{graphicx}
\usepackage{booktabs}
\usepackage{array}
\usepackage{hyperref}
\usepackage[table]{xcolor}
\usepackage{microtype}
\usepackage{balance}
\usepackage{tikz}
\usepackage[misc]{ifsym}
\usetikzlibrary{arrows.meta, positioning, shapes.geometric, fit, backgrounds}
\usepackage[ruled,linesnumbered]{algorithm2e}

\SetCommentSty{AlgoCommentSty}
\definecolor{TableHeader}{RGB}{232,238,247}
\definecolor{TableStripe}{RGB}{247,249,252}
\definecolor{TableHighlight}{RGB}{232,246,238}
\definecolor{TableRule}{RGB}{120,130,145}

\newcommand{\tstriped}{\rowcolor{TableStripe}}
\newcommand{\thighlight}{\rowcolor{TableHighlight}}

\newtheorem{theorem}{Theorem}
\newtheorem{proposition}[theorem]{Proposition}

\newtheorem{corollary}[theorem]{Corollary}
\newtheorem{definition}{Definition}
\newtheorem{assumption}{Assumption}

\newcommand{\ST}{\mathcal{S}_T}

\IEEEoverridecommandlockouts

\begin{document}

\title{Measuring the Unmeasurable: Markov Chain Reliability for LLM Agents}

\author{\IEEEauthorblockN{Phat T. Tran-Truong \href{https://orcid.org/0000-0003-3199-6333}{\includegraphics[scale=0.004]{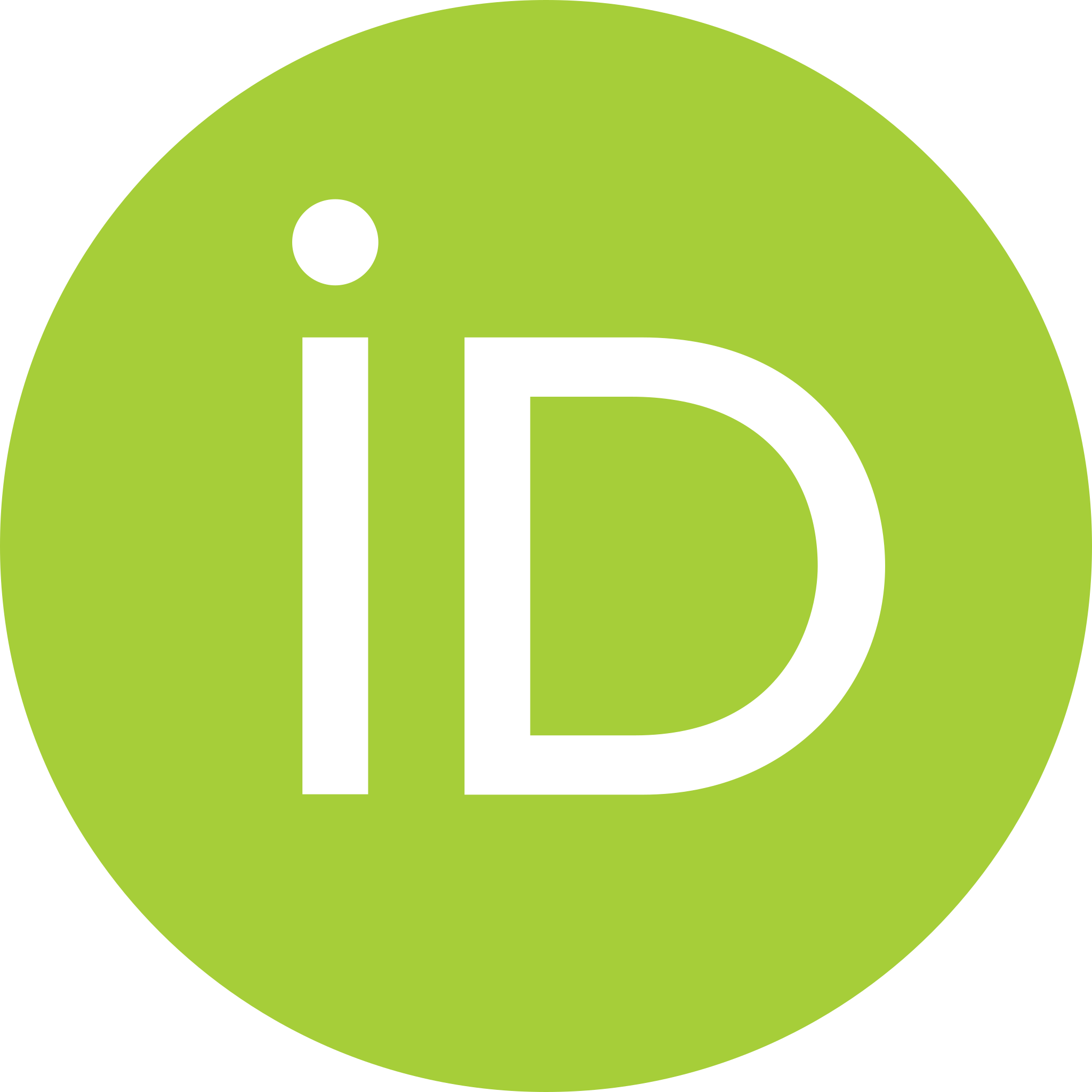}}, Xuan-Bach Le$^{\text{(\Letter)}}$ \href{https://orcid.org/0009-0003-6848-5403}{\includegraphics[scale=0.004]{sugg.png}}}
\IEEEauthorblockA{\textit{Faculty of Computer Science and Engineering} \\
\textit{Ho Chi Minh City University of Technology (HCMUT), VNU-HCM}\\
Ho Chi Minh City, Vietnam \\
\{phatttt, lexuanbach\}@hcmut.edu.vn}
}

\maketitle
\begingroup\renewcommand\thefootnote{}
\footnotetext{
$^{\text{(\Letter)}}$ Corresponding Author: Xuan-Bach Le (lexuanbach@hcmut.edu.vn)
}
\endgroup

\begin{abstract}
Large language model (LLM) agents increasingly operate as sequential
software systems, but their reliability is often summarized by scalar
benchmark metrics. Metrics such as pass$@k$, pass$^k$, and the
reliability decay curve (RDC) are useful summaries, but they do not
identify the success-time
distribution being estimated, test whether traces support that
distribution, or quantify finite-trace uncertainty. We present
\textsc{TraceToChain}, a reproducible pipeline that fits agent
execution traces to an absorbing discrete-time Markov chain (DTMC),
$\hat M=(\hat Q,\hat R_\oplus,\hat R_\ominus)$, with explicit
diagnostics and uncertainty. The pipeline builds an
automatic cluster taxonomy, estimates transitions with Laplace-smoothed
maximum-likelihood estimation (MLE), checks fit with a composite Akaike
information criterion (AIC) and Kolmogorov--Smirnov (KS)
goodness-of-fit certificate, and reports Dirichlet-posterior credible
intervals and non-parametric bootstrap intervals. We adapt classical
reliability mathematics (Kemeny--Snell~\cite{kemenysnell},
Cheung~\cite{cheung1980}, Goel--Okumoto~\cite{goelokt}) to agent
traces. The resulting first-passage view reconciles metrics usually
reported separately: pass$@k$, pass$^k$,
and the RDC are projections of one success-time distribution. On seven
controlled MAST-style frameworks with a strict 50/50 fit/test protocol,
held-out empirical RDCs overlay their analytic counterparts with
max $L_\infty^{\mathrm{RDC}} = 0.053$ (median $0.048$). A two-sample KS
test on the first-passage cumulative distribution function (CDF)
accepts the fitted chain with $p>0.05$ on $7/7$ frameworks
(min $p = 0.78$), and per-entry $95\%$ posterior and bootstrap
intervals agree to $\approx\!0.01$ at the median. 

% The anonymous
% artifact is available at \url{https://anonymous.4open.science/r/TraceToChain-EEE6}.
\end{abstract}

\begin{IEEEkeywords}
software reliability, large language models, agent systems,
absorbing Markov chains, fundamental matrix, pass@k, NHPP, reliability decay curve
\end{IEEEkeywords}

%% ================================================================

\section{Introduction}
\label{sec:intro}

Large language model (LLM) agents are deployed as sequential software
systems, yet they are often evaluated by scalar outcomes. A deployed
agent plans, calls tools, observes results, retries, and terminates in
success or failure, as in ReAct-style and tool-using
designs~\cite{yao2023react,schick2023toolformer}. This shift toward
autonomous interactive systems is increasingly visible across agent
surveys~\cite{wang2024agentsurvey}. Reliability for such systems is a
question about first-passage behavior, finite-trace uncertainty, and
model fit, not only a terminal pass rate. These are standard
dependability and site reliability
engineering (SRE)
concerns~\cite{avizienis2004dependability,beyer2016sre}, but current
LLM-agent benchmark summaries rarely state the reliability
distribution being estimated or audit whether it fits the traces.

This gap appears immediately in deployment. A fintech team may ship a
ReAct-style refund agent~\cite{yao2023react} after measuring
$\mathrm{pass}@1 = 0.72$ on $\tau$-bench~\cite{taub2024}, then need to
estimate reliability under a larger step budget, the end-to-end effect
of a fallback search tool, or the relation between an internal
$\mathrm{pass}^5$ dashboard and SRE mean-time-between-failures (MTBF)
language. These are ordinary questions for a sequential system, but a
single pass rate does not answer them.

We ask whether empirical agent traces can be turned into an audited
reliability model for such queries without rerunning a benchmark for
each horizon, local edit, or metric convention.

\vspace{0.35em}
\noindent\begin{tikzpicture}
\node[
  draw=black!35,
  fill=gray!4,
  rounded corners=3pt,
  line width=0.45pt,
  inner sep=5pt,
  text width=0.94\columnwidth,
  align=left
] {
\begin{minipage}{0.94\columnwidth}
\raggedright
\textbf{Reliability question.}
Can empirical LLM-agent traces become an audited model
for horizon queries, uncertainty, goodness-of-fit, and metric
reconciliation?
\end{minipage}
};
\end{tikzpicture}

Figure~\ref{fig:motivating} summarizes the gap. Scalar metrics leave
horizon queries, local counterfactuals, and denominator choices outside
the reported number. A fitted chain provides the shared object behind
those questions by treating pass$@k$, pass$^k$, and the reliability
decay curve (RDC) as projections of one
success-time distribution, where $k$ denotes repeated trials.

\begin{figure}[!t]
\centering
\begin{tikzpicture}[
    >=Stealth,
    font=\scriptsize\sffamily,
    hdr/.style={font=\bfseries\sffamily\scriptsize, align=center},
    q/.style={rectangle, rounded corners, draw=orange!80!black,
              fill=orange!10, line width=0.7pt, align=left,
              text width=2.15cm, inner sep=2.5pt, minimum height=1.25cm},
    s/.style={rectangle, rounded corners, draw=red!60!black,
              fill=red!6, line width=0.7pt, align=left,
              text width=2.15cm, inner sep=2.5pt, minimum height=1.25cm},
    a/.style={rectangle, rounded corners, draw=green!55!black,
              fill=green!6, line width=0.7pt, align=left,
              text width=2.15cm, inner sep=2.5pt, minimum height=1.25cm},
    lbl/.style={font=\scriptsize\itshape, text=gray!50!black},
    arr/.style={->, draw=gray!70, line width=0.7pt},
    xscale=1.0, yscale=1.0,
]
    % column centers
    \def\colQ{0}
    \def\colS{2.75}
    \def\colA{5.5}
    % row y-coordinates
    \def\rowH{0.85}
    \def\rowA{-0.5}
    \def\rowB{-2.05}
    \def\rowC{-3.6}

    % Headers
    \node[hdr] at (\colQ, \rowH)
        {Question};
    \node[hdr] at (\colS, \rowH)
        {Scalar metric};
    \node[hdr] at (\colA, \rowH)
        {Fitted chain $\hat M$};

    % Row 1: budget
    \node[q] (q1) at (\colQ, \rowA)
        {\textbf{Q1.} If $d{=}8\!\to\!16$, what reliability?};
    \node[s] (s1) at (\colS, \rowA)
        {Re-run at each $d$. No interval or extrapolation.};
    \node[a] (a1) at (\colA, \rowA)
        {$\mathcal R(d)$ on any grid
         (Prop.~\ref{thm:closed}) + 95\% interval.};

    % Row 2: fallback tool
    \node[q] (q2) at (\colQ, \rowB)
        {\textbf{Q2.} Add fallback search: $\Delta R_\infty$?};
    \node[s] (s2) at (\colS, \rowB)
        {Re-run full benchmark for one new point.};
    \node[a] (a2) at (\colA, \rowB)
        {Bound $\Delta R_\infty$
         locally (Prop.~\ref{thm:perturb}).};

    % Row 3: metric zoo
    \node[q] (q3) at (\colQ, \rowC)
        {\textbf{Q3.} pass$^5$ vs.\ SRE-MTBF: which is right?};
    \node[s] (s3) at (\colS, \rowC)
        {Two denominators and ad hoc reconciliation.};
    \node[a] (a3) at (\colA, \rowC)
        {Both are marginals of one success-time distribution
         (Prop.~\ref{thm:unify}).};

    \draw[arr] (q1) -- (s1);
    \draw[arr] (s1) -- (a1);
    \draw[arr] (q2) -- (s2);
    \draw[arr] (s2) -- (a2);
    \draw[arr] (q3) -- (s3);
    \draw[arr] (s3) -- (a3);
\end{tikzpicture}
\caption{Motivating example: a fitted absorbing chain turns one trace
         corpus into horizon, perturbation, and metric-reconciliation
         queries.}
\label{fig:motivating}
\end{figure}
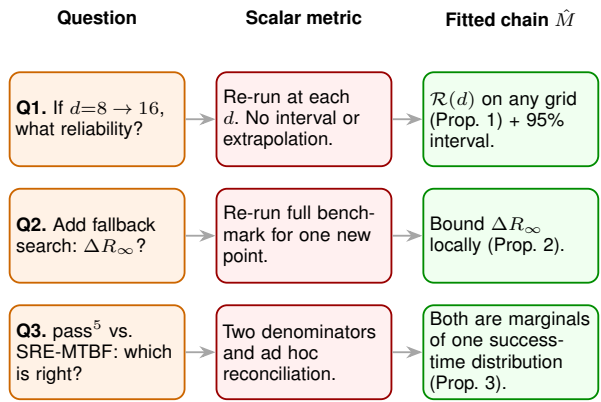

We make this distribution explicit by modeling each execution as a
trajectory in an absorbing discrete-time Markov chain (DTMC), with
transient states for intermediate behavior and absorbing states for
terminal success or failure. The fitted chain
$\hat M=(\hat Q,\hat R_\oplus,\hat R_\ominus)$ records intermediate
transitions in $\hat Q$ and state-specific exits in $\hat R_\oplus$
and $\hat R_\ominus$. Reliability at a step horizon, local
perturbation analysis, and metric reconciliation then become
first-passage queries on the same model
(Props.~\ref{thm:closed}, \ref{thm:perturb}, and \ref{thm:unify}).

The main challenge is auditing: the chain must be fit from empirical
traces, tested for goodness-of-fit, assigned finite-sample uncertainty,
and connected to established metrics. We therefore adapt classical
reliability mathematics to agent traces rather than adding another
benchmark-specific score. Absorbing chains and the fundamental matrix
$N=(I-Q)^{-1}$ are due to Kemeny and Snell~\cite{kemenysnell} and were
used in software reliability by Cheung~\cite{cheung1980}. Perturbation
analysis~\cite{stewart1998,hornjohnson2013} and quantitative
reliability modeling~\cite{trivedi2017reliability} provide the broader
engineering context. The Goel--Okumoto and
Musa--Okumoto models connect to reliability growth
models~\cite{goelokt,musa1987}, while stochastic-monotone chains
explain RDC shape~\cite{karlin1968tp,keilsonKester1977}. In the
rare-failure regime, the fitted agent chain recovers the Goel--Okumoto
limit (Proposition~\ref{thm:nhpp}).

We instantiate this view as \textsc{TraceToChain}, an audited pipeline
that fits an absorbing DTMC, attaches diagnostics and uncertainty, and
reports common LLM-agent metrics as views of one success first-passage
distribution. Our contributions are:
\begin{enumerate}[(1)]
  \item \emph{Trace-to-chain modeling.}
  \textsc{TraceToChain} (\S\ref{sec:state_construction},
  Algorithms~\ref{alg:trace}--\ref{alg:gof}) maps traces to a fitted
  absorbing DTMC with a composite Akaike information criterion (AIC)
  and Kolmogorov--Smirnov (KS) goodness-of-fit certificate.
  \item \emph{Uncertainty and diagnostics.}
  \S\ref{sec:uq} gives Dirichlet-posterior and trace-level bootstrap
  intervals for the fitted chain and derived reliability quantities.
  \item \emph{Held-out validation.}
  A strict held-out validation on seven controlled MAST-style frameworks
  (simulation study SS9, \S\ref{sec:ss9}) obtains maximum RDC error
  $L_\infty^{\mathrm{RDC}}=0.053$ (median $0.048$) and held-out KS
  $p>0.05$ on $7/7$ frameworks (min $p=0.78$).
  \item \emph{Metric unification.}
  pass$@k$, pass$^k$, and the RDC become projections of one success
  first-passage distribution, making their denominators explicit.
\end{enumerate}
Raw SWE-bench and $\tau$-bench trajectories require step-level feature
data and remain future validation targets.

The paper proceeds as follows. Related Work (\S\ref{sec:related})
positions the problem, and Formalism (\S\ref{sec:formal}) and State
Construction (\S\ref{sec:state_construction}) define and fit the
absorbing-chain model. Analytic Tools (\S\ref{sec:core}) and
Extensions (\S\ref{sec:ext}) derive the main reliability results.
Simulation Validation (\S\ref{sec:sim}) and Empirical Case Studies
(\S\ref{sec:case_studies}) evaluate the machinery, and the final
sections discuss limits and implications
(\S\ref{sec:threats}--\S\ref{sec:conc}).
\section{Related Work}
\label{sec:related}

Prior work provides many of the ingredients for deployment reliability
decisions, but not yet their combination. Reliability engineering
contributes the vocabulary, agent benchmarks provide the trace setting,
and probabilistic verification supplies model-based reasoning. The
missing piece is an audited fit from LLM-agent traces: a model with
uncertainty, diagnostics, and shared semantics for common metrics.

\begin{table*}[!t]
\centering
\caption{Comparison of \textsc{TraceToChain} with related evaluation
and verification frameworks.}
\label{tab:method-comparison}
\scriptsize
\setlength{\tabcolsep}{3pt}
\renewcommand{\arraystretch}{1.12}
\begin{tabular}{@{}>{\raggedright\arraybackslash}p{0.22\textwidth}
                >{\raggedright\arraybackslash}p{0.15\textwidth}
                >{\raggedright\arraybackslash}p{0.2\textwidth}
                >{\raggedright\arraybackslash}p{0.22\textwidth}
                >{\raggedright\arraybackslash}p{0.1\textwidth}@{}}
\toprule
\textbf{Approach} & \textbf{Input} & \textbf{Reliability semantics} &
\textbf{Diagnostics / uncertainty} & \textbf{LLM-agent fit} \\
\midrule
\mbox{Classical SRE / NHPP~\cite{cheung1980,goelokt}} &
Aggregate failures &
Growth / failure intensity &
Aggregate intervals, no trace states &
Indirect \\
\tstriped
\mbox{Agent benchmark metrics~\cite{chen2021codex,rdc2025}} &
Outcomes / traces &
Success, pass$@k$, pass$^k$, RDC &
Metric-specific, limited fit checks &
Direct / split \\
\mbox{PRISM / Storm~\cite{prism2011,storm2017}} &
User model &
PCTL queries &
Verification, no trace fitting &
Indirect \\
\tstriped
\mbox{TriCEGAR / ProbGuard~\cite{tricegar2026,pro2guard2025}} &
Empirical traces &
Learned Markov abstraction &
Refinement diagnostics &
Related \\
\thighlight
\mbox{\textbf{\textsc{TraceToChain} (this work)}} &
\textbf{LLM-agent traces} &
\textbf{First-passage DTMC reliability} &
\textbf{AIC/KS diagnostics, posterior/bootstrap intervals} &
\textbf{Direct} \\
\bottomrule
\end{tabular}
\end{table*}

\paragraph{Classical SRE.}
Classical software reliability gives this paper its vocabulary and
analytic tools. The dependability taxonomy of Avi{\v z}ienis et
al. separates reliability, availability, safety, and failure
semantics~\cite{avizienis2004dependability}, while SRE
practice~\cite{beyer2016sre} turns such concepts into measurable
service objectives. State-based reliability modeling is equally
central: Cheung~\cite{cheung1980} used Markov chains for program-level
software reliability, Trivedi and Bobbio~\cite{trivedi2017reliability}
give a broader engineering treatment of reliability and availability
modeling, and Kemeny and Snell~\cite{kemenysnell} developed the
absorbing-chain framework and fundamental matrix $N=(I-Q)^{-1}$ used
by our closed-form reliability quantities. Reliability-growth models
provide a complementary aggregate view through the Goel--Okumoto NHPP
model~\cite{goelokt} and the Musa--Okumoto
formulation~\cite{musa1987}. These tools are not, by themselves, a
trace-level evaluation method for LLM agents; Proposition~\ref{thm:nhpp}
connects them by recovering the NHPP form as a rare-failure scaling
limit of the fitted agent chain.

\paragraph{Agent benchmarks.}
Agent benchmarks show why trace-level reliability semantics are
needed. ReAct~\cite{yao2023react}, Toolformer~\cite{schick2023toolformer},
and Reflexion~\cite{shinn2023reflexion} established agent
trajectories that interleave reasoning, tool use, and revision steps;
ToolBench~\cite{qin2023toolllm} and $\tau$-bench~\cite{taub2024} make
API choice, tool errors, and tool-agent-user interaction part of the
evaluation surface. SWE-bench~\cite{swebench2024},
WebArena~\cite{zhou2023webarena}, ALFWorld~\cite{shridhar2021alfworld},
and Voyager~\cite{wang2023voyager} move evaluation further toward
software, web, embodied, and open-ended environments where success is
long-horizon and context-dependent. MAST~\cite{mast2025} diagnoses
multi-agent failure modes, and AgentBench~\cite{agentbench2024}
broadens the interactive evaluation space beyond one task family.
Together, these benchmarks expose the trace setting; they do not by
themselves unify the reported metrics. pass$@k$~\cite{chen2021codex}
originated as a repeated-sampling success metric for code generation,
and agent evaluation later adapted it alongside the reliability decay
curve (RDC)~\cite{rdc2025}. In our formalism, the RDC is exactly the
$d$-step reliability $R(d)$ from Definition~\ref{def:reliab}.

\paragraph{Formal methods for agents.}
Formal methods supply the model-based reasoning layer. Probabilistic
computation tree logic (PCTL)~\cite{hansson1994pctl} supports
reasoning about probabilistic time and reliability properties, and
model-checking texts~\cite{baier2008principles} place such logics in a
broader verification framework. Given a probabilistic model,
PRISM~\cite{prism2011} and Storm~\cite{storm2017} can verify PCTL
properties. More recent trace-driven frameworks, including
TriCEGAR~\cite{tricegar2026} and ProbGuard~\cite{pro2guard2025}, learn
DTMC or Markov decision process (MDP) abstractions from empirical
traces. We therefore do not claim trace-to-DTMC abstraction as the
main theoretical novelty. Our focus is its reliability-engineering use
for LLM agents: \textsc{TraceToChain} fits an absorbing DTMC with a
composite AIC$\,\wedge\,$KS goodness-of-fit certificate and uncertainty
quantification. Simulation study SS9 (\S\ref{sec:ss9}) evaluates
controlled held-out recovery on a held-out half of controlled
MAST-style trace corpora using KS distance and
$L_\infty^{\mathrm{RDC}}$ error. The first-passage view explains why
pass$@k$, pass$^k$, and the RDC are restrictions of a single
distribution (Proposition~\ref{thm:unify}), while the Goel--Okumoto
NHPP appears as a rare-failure scaling limit of the fitted chain
(Proposition~\ref{thm:nhpp}). This positions the work as reliability
evaluation for LLM agents, not new Markov-chain theory.

\noindent\textbf{Positioning summary.}
Table~\ref{tab:method-comparison} summarizes how these traditions
align with \textsc{TraceToChain}. The comparison is taxonomic rather
than competitive: classical reliability models provide semantics,
agent benchmarks provide traces, and probabilistic-verification tools
provide model-based queries.

\textsc{TraceToChain}'s distinction is the audited combination:
traces define the estimate, diagnostics test the abstraction,
uncertainty bounds the reported quantities, and first-passage
semantics connect horizon reliability, perturbations, and common
LLM-agent metrics. The result is a reliability argument from trace data
rather than another scalar score.

%% ================================================================

\section{Formalism}
\label{sec:formal}
\label{sec:preliminaries}

Deployment questions require probabilities that can be interpreted from
traces. The operational question is simple: after an agent has spent
several steps reasoning, calling tools, and observing outputs, what is
the probability that it reaches success before failure? We represent
that ``think-act-observe'' loop with a fitted absorbing discrete-time
Markov chain (DTMC), following classical absorbing-chain and
software-reliability models~\cite{kemenysnell,cheung1980}. Intermediate
agent behavior becomes transient execution states, and the two terminal
outcomes are task success~$\oplus$ and fatal failure~$\ominus$.
Reliability is then a first-passage question: how soon a run reaches
the success absorber, and whether failure arrives first.

This view also gives benchmark metrics a common semantics.
pass$@k$~\cite{chen2021codex}, pass$^k$, and the reliability decay curve
(RDC)~\cite{rdc2025} are all restrictions of one success
first-passage distribution. Figure~\ref{fig:concept_translation}
shows the trace-to-state interpretation used by the formal model.
We write $\ST$ for the transient state set: the discrete labels used
for intermediate reasoning, tool-use, and observation behavior before
a run reaches one of the two absorbing outcomes.

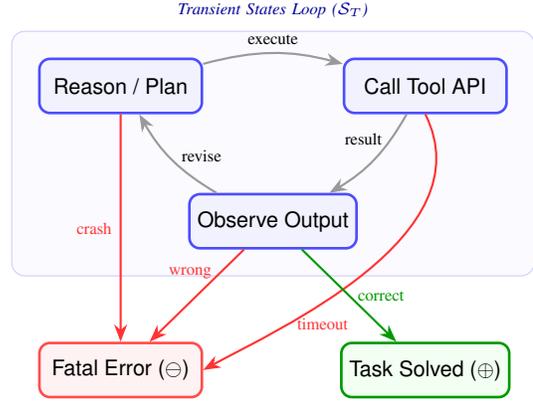
\begin{figure}[!t]
  \centering
  \begin{tikzpicture}[
      >=Stealth,
      scale=0.9, every node/.style={transform shape},
      concept/.style={rectangle, rounded corners, draw=blue!70, fill=blue!5, line width=1pt, align=center, font=\small\sffamily, minimum width=2.4cm, minimum height=0.8cm},
      success/.style={rectangle, rounded corners, draw=green!60!black, fill=green!5, line width=1pt, align=center, font=\small\sffamily, minimum width=2.4cm, minimum height=0.8cm},
      fail/.style={rectangle, rounded corners, draw=red!70, fill=red!5, line width=1pt, align=center, font=\small\sffamily, minimum width=2.4cm, minimum height=0.8cm},
      trans/.style={->, draw=gray!80, line width=0.8pt, font=\scriptsize}
  ]
      % Transient nodes
      \node[concept] (plan) at (0, 0) {Reason / Plan};
      \node[concept] (tool) at (4.5, 0) {Call Tool API};
      \node[concept] (obs) at (2.25, -2) {Observe Output};
      
      % Absorbing nodes
      \node[fail] (fail1) at (0, -4.2) {Fatal Error ($\ominus$)};
      \node[success] (succ) at (4.5, -4.2) {Task Solved ($\oplus$)};
      
      % Transient edges
      \draw[trans] (plan) to[bend left=15] node[above] {execute} (tool);
      \draw[trans] (tool) to[bend left=15] node[left, near start] {result} (obs);
      \draw[trans] (obs) to[bend left=15] node[right, pos=0.55, xshift=2pt] {revise} (plan);
      
      % Absorbing edges
      \draw[trans, color=red!80] (plan) -- node[left] {crash} (fail1);
      \draw[trans, color=red!80] (tool.south) .. controls +(0.9,-1.8) and +(1.1,0.6) .. node[right, pos=0.72] {timeout} (fail1.east);
      \draw[trans, color=red!80] (obs) -- node[left, near start] {wrong} (fail1);
      \draw[trans, color=green!60!black] (obs) -- node[right] {correct} (succ);
      
      % Background box
      \begin{scope}[on background layer]
          \node[draw=blue!20, fill=blue!2, rounded corners=5pt, fit=(plan)(tool)(obs), inner sep=10pt] (tbox) {};
      \end{scope}
      \node[text=blue!60!black, font=\scriptsize\itshape, above=2pt of tbox] {Transient States Loop ($\ST$)};
      
  \end{tikzpicture}
  \caption{Conceptual translation from an agent's interactive loop to
           a DTMC. Reasoning phases become transient states, and each
           run eventually absorbs into success or failure.}
  \label{fig:concept_translation}
\end{figure}

The fitted absorbing DTMC is a finite state-space summary of trace
dynamics. We denote it by
$\hat M=(\ST,\hat Q,\hat R_\oplus,\hat R_\ominus,s_0)$, where
$\hat Q$ contains estimated transitions among transient states,
$\hat R_\oplus$ and $\hat R_\ominus$ contain estimated exits to
success and failure, and $s_0$ is the initial state. Later sections
estimate these quantities from traces and test whether the abstraction
is adequate before using it for reliability claims.

\begin{definition}[Agent Markov chain]\label{def:amc}
An \emph{agent Markov chain} (AMC) is a tuple
$\hat M=(\ST,\hat Q,\hat R_\oplus,\hat R_\ominus,s_0)$
with the following components. $\ST$ is the finite set of transient
\emph{execution states} obtained from trace featurization
(\S\ref{sec:state_construction}). The matrix
$\hat Q\in[0,1]^{|\ST|\times|\ST|}$ gives transitions among those
states. The vectors
$\hat R_\oplus,\hat R_\ominus\in[0,1]^{|\ST|}$ give absorption
probabilities into task-complete state $\oplus$ and fatal-failure
state $\ominus$. Mass is conserved:
$\hat Q\mathbf{1}+\hat R_\oplus+\hat R_\ominus=\mathbf{1}$. Finally,
$s_0\in\ST$ is the initial state, or more generally an initial
distribution~$\pi_0$. Figure~\ref{fig:markov_chain} visualizes a
small AMC.
\end{definition}

\begin{definition}[Reliability]\label{def:reliab}
The $d$-step reliability is
$\mathcal R(d;\hat M)=\Pr[\tau_\oplus\le d\mid X_0=s_0]$, where
$\tau_\oplus$ is the first time the chain reaches the success
absorber. The asymptotic reliability is
$R_\infty=\lim_{d\to\infty}\mathcal R(d;\hat M)$. The Kemeny--Snell
fundamental matrix~\cite{kemenysnell} is $N=(I-\hat Q)^{-1}$. It
counts expected visits to transient states before absorption and
yields the closed form in Prop.~\ref{thm:closed}.
\end{definition}

This first-passage view also connects the fitted chain to classical
reliability growth. A central software-reliability model is the
non-homogeneous Poisson process (NHPP), a counting-process model for
failure discovery over time, especially the Goel--Okumoto
model~\cite{goelokt}. Musa--Okumoto gives a related reliability-growth
formulation~\cite{musa1987}. Under rare-failure scaling, the fitted
agent chain recovers this NHPP form as a limit
(Proposition~\ref{thm:nhpp}), which explains when classical
reliability-growth estimators are meaningful for agent traces.

\noindent\textbf{Model assumptions.}
The model makes four assumptions explicit. Transience means eventual
absorption and gives the fundamental matrix. Invertibility is the
algebraic condition needed for the closed forms. The initial condition
fixes where traces start, and i.i.d.\ replications are required when
interpreting pass$@k$ and pass$^k$ across repeated trials. The
first-order Markov property is not taken on faith: the
order-selection and goodness-of-fit tests in \S\ref{sec:order_selection}
and \S\ref{sec:gof} report when the trace corpus does not support the
fitted absorbing DTMC.

\begin{assumption}[Transience]\label{A:trans}
$\rho(\hat Q)<1$.
\end{assumption}
\begin{assumption}[Invertibility]\label{A:inv}
$(I-\hat Q)$ is invertible.
\end{assumption}
\begin{assumption}[Initial condition]\label{A:init}
$s_0$ is deterministic (extensions replace $e_{s_0}^\top$ by
$\pi_0^\top$).
\end{assumption}
\begin{assumption}[I.i.d.\ replications]\label{A:iid}
For pass$^k$ and pass$@k$, trials are i.i.d.\ unless noted.
\end{assumption}

\begin{definition}[Local perturbation family]\label{def:perturb}
For sensitivity analysis, let
$Q(\varepsilon)=Q_0+\varepsilon\Delta$ for
$0\le\varepsilon\le\varepsilon_{\max}$, where rows remain
substochastic and $\rho(Q(\varepsilon))<1$. The success-exit vector
$R_{\oplus,0}$ is held fixed, and any removed transient mass is
assigned to the failure absorber so that
$Q(\varepsilon)\mathbf{1}+R_{\oplus,0}+R_{\ominus}(\varepsilon)
=\mathbf{1}$. Operationally, this represents a local design change,
such as adding a fallback tool, without re-running the full benchmark.
\end{definition}

\begin{figure}[!t]
  \centering
  \begin{tikzpicture}[
      >=Stealth,
      scale=0.85, every node/.style={transform shape},
      transient/.style={circle, draw=blue!80, fill=blue!10, line width=1pt, minimum size=0.9cm, font=\small\bfseries},
      absorbing/.style={circle, draw=#1, fill=#1, fill opacity=0.15, text opacity=1, line width=1.5pt, minimum size=1cm, font=\small\bfseries, double, double distance=1.5pt, double=white},
      trans/.style={->, draw=gray!80, line width=0.8pt, font=\scriptsize},
      tosucc/.style={->, draw=green!60!black, line width=1pt, font=\scriptsize},
      tofail/.style={->, draw=red!70, line width=1pt, font=\scriptsize}
  ]
      % States
      \node[transient] (s0) at (0, 2.2)  {$s_0$};
      \node[transient] (s1) at (0, 0)    {$s_1$};
      \node[transient] (s2) at (0,-2.2)  {$s_2$};
      \node[absorbing=green!60!black] (succ) at (4.5, 1.1)  {$\oplus$};
      \node[absorbing=red!70]   (fail) at (4.5,-1.1)  {$\ominus$};

      % Transitions (Q)
      \draw[trans] (s0) edge[loop left, looseness=4, out=160, in=200] node[left] {0.5} (s0);
      \draw[trans] (s0) -- node[right] {0.3} (s1);
      \draw[trans] (s1) edge[loop left, looseness=4, out=160, in=200] node[left] {0.4} (s1);
      \draw[trans] (s1) -- node[right] {0.3} (s2);
      \draw[trans] (s2) edge[loop left, looseness=4, out=160, in=200] node[left] {0.6} (s2);

      % Transitions (R)
      \draw[tosucc] (s0) to[bend left=10]  node[above, sloped] {0.1} (succ);
      \draw[tofail] (s0) to[bend right=10] node[below, sloped] {0.1} (fail);
      \draw[tosucc] (s1) to[bend left=5]   node[above, sloped] {0.2} (succ);
      \draw[tofail] (s1) to[bend right=5]  node[below, sloped] {0.1} (fail);
      \draw[tosucc] (s2) to[bend left=10]  node[above, sloped] {0.3} (succ);
      \draw[tofail] (s2) to[bend right=10] node[below, sloped] {0.1} (fail);

      % Labels
      \begin{scope}[on background layer]
          \node[draw=blue!20, fill=blue!5, rounded corners=5pt, fit=(s0)(s1)(s2), inner sep=10pt] (tbox) {};
      \end{scope}
      \node[text=blue!60!black, font=\scriptsize\itshape, above=1pt of tbox] {Transient states $\mathcal{S}_T$};
  \end{tikzpicture}
  \caption{Illustration of a fitted absorbing DTMC. Transitions among
           transient states (matrix $Q$) model agent logic, while
           success and failure exit vectors $R_\oplus$ and
           $R_\ominus$ lead to absorbing states $\oplus$ and
           $\ominus$.}
  \label{fig:markov_chain}
\end{figure}

\noindent\textbf{Remark.}
The fitted absorbing DTMC is not intended to recover every hidden state
of the agent or its environment. It is a compact reliability model
estimated from observable traces, and its outputs are meaningful only
when the state construction and goodness-of-fit checks support the
first-order absorbing-DTMC approximation.

%% ================================================================

\section{State Construction from Traces}
\label{sec:state_construction}

The formalism is useful only if traces can produce the fitted chain
reproducibly and with checks that expose when the abstraction fails.
Trace tokens, tool calls, observations, and context become state labels,
transition counts, and exit probabilities that can be inspected. A
fitted Markov model is therefore an estimate plus evidence: the trace
representation, order test, first-passage fit, and uncertainty analysis
must all support its use.
Algorithm~\ref{alg:trace} implements the construction pipeline, and
Algorithm~\ref{alg:gof} gives the goodness-of-fit (GoF) protocol that
simulation study SS7 (\S\ref{sec:ss7}) checks on controlled corpora.

\subsection{Featurization and Discretization}
\label{sec:featurization}

The first step is to turn heterogeneous trace events into countable
states. In the running example, reasoning steps, tool calls, and
observations are featurized and then clustered into labels such as
``plan'' or ``tool result.'' Let
$\mathcal{T} = \{\tau^{(i)}\}_{i=1}^{n}$ be the trace corpus. A run is
$\tau^{(i)} = (\sigma_1^{(i)}, \sigma_2^{(i)}, \ldots,
\sigma_{L_i}^{(i)}, y^{(i)})$, where each $\sigma_t^{(i)}$ is one
observed step and $y^{(i)} \in \{\oplus, \ominus\}$ is the terminal
outcome. A featurizer $\phi : \sigma \mapsto \mathbb{R}^p$ maps each
step to a vector representation, either:
\begin{itemize}
  \item \emph{rule-based} (tool type, retry flag, error code, used in
        our simulation study SS7 and MAST experiments), or
  \item \emph{learned} (a pretrained sentence encoder applied to the
        agent's chain-of-thought and interchangeable in
        Algorithm~\ref{alg:trace}).
\end{itemize}
We cluster the feature vectors $\{\phi(\sigma_t^{(i)})\}$ using
agglomerative Ward linkage~\cite{ward1963}. The number of clusters
$m \in [k_{\min}, k_{\max}]$ is selected by silhouette
score~\cite{rousseeuw1987}. Each step is assigned to the selected
cluster, yielding
$s_1^{(i)}, \ldots, s_{L_i}^{(i)} \in \{1, \ldots, m\}$ followed by
the terminal outcome.

\subsection{Transition Estimation and Order Selection}
\label{sec:order_selection}

Once traces have state labels, estimating the fitted chain is a
counting problem with a direct operational interpretation. A row of
$\hat Q$ answers: after the agent is in this state, where does it go
next if it has not terminated? The corresponding entries of
$\hat R_\oplus$ and $\hat R_\ominus$ answer how often the same state
exits to success or fatal failure. We fit these quantities by
Laplace-smoothed maximum-likelihood estimation (MLE) with
$\alpha=1$, so
$\hat{Q}_{ij} = (c_{ij} + \alpha) / (c_i + \alpha(m+2))$
where $c_{ij}$ is the count of transitions from state $i$ to state
$j$, $c_{i,\oplus}$ and $c_{i,\ominus}$ count terminal exits from
state $i$, and
$c_i = \sum_j c_{ij} + c_{i,\oplus}+c_{i,\ominus}$. The same
denominator normalizes the success and failure exits. Smoothing
prevents sparse rows from assigning zero probability to unobserved but
plausible exits.

We also test whether first-order memory is sufficient. The comparison
uses a first-vs-second-order Markov Akaike information criterion (AIC)
test~\cite{akaike1974}:
$\Delta_{\rm AIC} = \mathrm{AIC}_2 - \mathrm{AIC}_1
                  = -2(\ell_2 - \ell_1) + 2(k_2 - k_1),$
where $\ell_o$ and $k_o$ are the log-likelihood and free-parameter
count under order $o$. The first-order model is preferred iff
$\Delta_{\rm AIC} > 0$. Simulation study SS7 (\S\ref{sec:ss7}) shows
that this test has high power against second-order ground truth.

\paragraph{Time-homogeneity as a testable assumption.}
When we pool transition counts, we assume a \emph{time-homogeneous}
kernel $\Pr[s_{t+1}=j \mid s_t=i]$ that does not vary with $t$. The
composite AIC$\,\wedge\,$KS protocol (Algorithm~\ref{alg:gof}) tests
whether that simplification is defensible. AIC checks whether recent
history still matters after conditioning on the current state, while
the Kolmogorov--Smirnov (KS) first-passage test can reveal drift in
exit hazards. If either test fails, the corpus should be segmented
before counts are pooled.

\paragraph{Variable horizons, early termination, and censoring.}
Runs can have different lengths and may end in success ($\oplus$),
failure ($\ominus$), or right-censoring. Algorithm~\ref{alg:trace}
counts each observed transient transition once. Completed runs add one
absorber count, while censored runs contribute only the transitions
observed before censoring. This keeps $\hat Q$ tied to observed
movement among transient states, but it can make the exit estimates
$\hat R_\oplus$ and $\hat R_\ominus$ conservative when censoring is
substantial. In \S\ref{sec:mast}, we observe censoring below $10\%$,
so this correction is a small source of conservatism rather than a
dominant driver of the fitted exits.

\begin{algorithm}[t]
\small
\DontPrintSemicolon
\KwIn{Traces $\mathcal{T}$, featurizer $\phi$, cluster range
      $[k_{\min}, k_{\max}]$, Laplace $\alpha$.}
\KwOut{$\hat{Q}, \hat{R}_\oplus, \hat{R}_\ominus$, labels $\pi$,
       AIC order verdict.}
\tcc{1. Featurize and vector-stack}
$X \leftarrow [\phi(\sigma_t^{(i)})]_{i,t}$\;
\tcc{2. Ward clustering with silhouette-selected $k$}
\For{$k = k_{\min}$ to $k_{\max}$}{
  $L_k \leftarrow$ AgglomerativeWard$(X, k)$\;
  $\mathrm{sil}_k \leftarrow $ silhouette$(X, L_k)$\;
}
$m^\star \leftarrow \arg\max_k \mathrm{sil}_k$;\quad
$\pi \leftarrow L_{m^\star}$\;
\tcc{3. Laplace-smoothed MLE}
\For{each transition $s_t^{(i)} \to s_{t+1}^{(i)}$}{
  $c_{s_t, s_{t+1}} \mathrel{+}= 1$\;
}
\For{each terminal exit $s_{L_i}^{(i)} \to y^{(i)}$}{
  $c_{s_{L_i}, y^{(i)}} \mathrel{+}= 1$\;
}
$\hat{Q}_{ij} \leftarrow (c_{ij} + \alpha) / (c_i + \alpha(m^\star + 2))$\;
$\hat{R}_{\oplus,i}, \hat{R}_{\ominus,i}$: same normalization against
terminal counts.\;
\tcc{4. Markov-order check}
Compute $\ell_1, \ell_2, k_1, k_2$;\quad
$\Delta_{\rm AIC} \leftarrow -2(\ell_2 - \ell_1) + 2(k_2 - k_1)$\;
\KwRet{$(\hat{Q}, \hat{R}_\oplus, \hat{R}_\ominus, \pi,
       \mathrm{1st\text{-}order if } \Delta_{\rm AIC} > 0)$}\;
\caption{\textsc{TraceToChain}: construct and audit an absorbing DTMC
         from agent execution traces.}
\label{alg:trace}
\end{algorithm}

\subsection{Goodness-of-Fit Protocol}
\label{sec:gof}

Fitting helps only if the abstraction matches the traces. A rejection
is an engineering result, not a nuisance: it says the current
state labels should not be used for reliability queries without
segmentation, new features, or a richer model. Given
$\hat{M} = (\hat{Q}, \hat{R}_\oplus, \hat{R}_\ominus)$, we use two
checks that address different failure modes:

\emph{Layer 1 (KS on first-passage).} We compute the analytic
success first-passage-time (FPT) cumulative distribution function
$F_{\tau_\oplus \mid \tau_\oplus < \tau_\ominus}^{\hat M}$,
conditional on eventual success, and compare it with the empirical
distribution of success FPTs among successful traces using a two-sample
KS test~\cite{smirnov1948}. The null is retained iff
$p_{\rm KS} > 0.05$.

\emph{Layer 2 (AIC).} The AIC layer asks whether the first-order
chain is adequate compared with a second-order alternative. We reject
the first-order model if $\Delta_{\rm AIC} < 0$.

\emph{Composite rule.} The chain is accepted iff
\emph{both} $p_{\rm KS} > 0.05$ \emph{and} $\Delta_{\rm AIC} > 0$.
This rule catches second-order dependence that the FPT-marginal KS test
can miss (see \S\ref{sec:ss7}). If either layer fails, later
reliability quantities should be treated as unsupported by the current
trace abstraction.

\begin{algorithm}[t]
\small
\DontPrintSemicolon
\KwIn{Traces $\mathcal{T}$, fitted chain $\hat{M}$, threshold
      $\alpha_{\rm KS}=0.05$.}
\KwOut{\textsc{Accept} / \textsc{Reject}, $(p_{\rm KS},\Delta_{\rm AIC})$.}
Compute conditional analytic CDF
$F_{\tau_\oplus \mid \tau_\oplus < \tau_\ominus}^{\hat M}$\;
Collect empirical FPTs from $\mathcal{T}_\oplus$ (success traces)\;
$p_{\rm KS} \leftarrow$ two-sample KS($\hat{F}$, $F^{\hat{M}}$)\;
$\Delta_{\rm AIC} \leftarrow$ (Algorithm~\ref{alg:trace}, step~4)\;
\eIf{$p_{\rm KS} > \alpha_{\rm KS}$ \textbf{and} $\Delta_{\rm AIC} > 0$}{
  \KwRet{\textsc{Accept}, $(p_{\rm KS}, \Delta_{\rm AIC})$}\;
}{
  \KwRet{\textsc{Reject}, $(p_{\rm KS}, \Delta_{\rm AIC})$}\;
}
\caption{Composite KS/AIC goodness-of-fit test for the agent-DTMC
         assumption.}
\label{alg:gof}
\end{algorithm}

\subsection{Deterministic Guarantees and Sources of Nondeterminism}
Given $\phi$, the cluster range, $\alpha$, and a silhouette
tie-breaking rule, Algorithm~\ref{alg:trace} is deterministic in the
trace corpus. Algorithm~\ref{alg:gof} uses no simulation. The
trace-generator seeds are included in the artifact submission.

\subsection{Uncertainty Quantification on \texorpdfstring{$\hat Q$, $\hat R_\oplus$, $\hat R_\ominus$}{Q, R+, R-}}
\label{sec:uq}

Algorithm~\ref{alg:trace} returns point estimates, but reliability
quantities inherit uncertainty from the fitted chain. A small trace
corpus may estimate a row of $\hat Q$ or an exit probability poorly,
and that uncertainty should be visible before a reviewer or operator
interprets $R(d)$, $R_\infty$, or a perturbation result. We attach
intervals to $\hat Q$, $\hat R_\oplus$, and $\hat R_\ominus$ and
propagate them to downstream quantities
(results~\ref{thm:closed}--\ref{thm:mixing}).

\paragraph{(i) Closed-form Dirichlet posterior.}
Dirichlet intervals quantify row-wise probability uncertainty. The
Laplace-smoothed MLE is the posterior mean under a symmetric
Dirichlet$(\alpha)$ prior over
$\{s_1,\ldots,s_m,\oplus,\ominus\}$. The row-$i$
posterior is
$\mathrm{Dir}(c_{i,\cdot} + \alpha \mathbf{1})$ and the marginal
for any single entry $X_{i,j}$ (including absorber columns) is
$\mathrm{Beta}\!\left(c_{i,j}+\alpha,\; c_i + \alpha(m+2) -
c_{i,j} - \alpha\right)$, where $c_i = \sum_{j'} c_{i,j'}$ includes
both transient and absorber outcomes. Equal-tailed
$(1-\alpha_{\rm CI})$ credible intervals (CIs) are the corresponding
Beta quantiles.

\paragraph{(ii) Non-parametric trace-level bootstrap.}
Bootstrap intervals quantify trace-sampling variability. We resample
traces with replacement $B$ times, assign each step to its nearest
target centroid to keep labels aligned, and recompute the MLE. The
implemented full re-clustering path (\texttt{fast=False}) gives
comparable but wider intervals.

\paragraph{Empirical ranges.}
Table~\ref{tab:mast-states} reports representative MAST-derived state
taxonomies. Cluster names come from dominant rule-based features
(\S\ref{sec:featurization}), and posterior intervals show which
success exits are well estimated or sparse. The differences are
descriptive rather than causal, but they expose state-level variation
hidden by aggregate pass/fail rates.

Table~\ref{tab:uq-summary} summarizes uncertainty on entries of
$\hat Q$. Posterior and bootstrap median CI widths are close across
frameworks, while larger maximum widths identify sparse or unstable
rows. All downstream $\mathcal{R}(d)$, mean time to absorption
($\mathrm{MTTA}$), and pass$^k$ quantities inherit these CIs by
propagation (Lipschitz-continuous in
$(\hat Q,\hat R_\oplus)$ by Theorem~\ref{thm:perturb}).

% Auto-generated by SS8_uncertainty_quantification.py
\begin{table}[t]
\centering
\caption{MAST-derived state taxonomy and 95\% posterior intervals for success-exit probabilities.}
\label{tab:mast-states}
\scriptsize
\setlength{\tabcolsep}{2pt}
\begin{tabular}{l l c c c}
\toprule
\textbf{Framework} & \textbf{Cluster} & \textbf{$\hat{R}_{\oplus,i}$} & \textbf{95\% post.\ CI} & \textbf{$\hat{R}_{\ominus,i}$} \\
\midrule
react~\cite{yao2023react} & tool\_call (100\%) & 0.064 & [0.039,0.094] & 0.067 \\
\tstriped
 & plan (100\%) & 0.076 & [0.048,0.109] & 0.072 \\
 & retry (100\%) & 0.034 & [0.004,0.092] & 0.085 \\
\tstriped
 & reflect (100\%) & 0.081 & [0.038,0.138] & 0.081 \\
 & error\_parse (100\%) & 0.082 & [0.028,0.162] & 0.082 \\
\tstriped
 & wait (100\%) & 0.057 & [0.007,0.153] & 0.057 \\
\midrule
reflexion~\cite{shinn2023reflexion} & plan (100\%) & 0.121 & [0.074,0.178] & 0.027 \\
\tstriped
 & error\_parse (100\%) & 0.083 & [0.041,0.138] & 0.033 \\
 & retry (100\%) & 0.054 & [0.020,0.103] & 0.045 \\
\tstriped
 & reflect (100\%) & 0.104 & [0.056,0.166] & 0.043 \\
 & tool\_call (100\%) & 0.097 & [0.066,0.133] & 0.064 \\
\tstriped
 & wait (100\%) & 0.128 & [0.049,0.236] & 0.064 \\
\midrule
toolformer~\cite{schick2023toolformer} & retry (100\%) & 0.105 & [0.057,0.164] & 0.048 \\
\tstriped
 & error\_parse (100\%) & 0.046 & [0.015,0.093] & 0.028 \\
 & tool\_call (100\%) & 0.078 & [0.053,0.108] & 0.062 \\
\tstriped
 & wait (100\%) & 0.073 & [0.021,0.154] & 0.036 \\
 & plan (100\%) & 0.092 & [0.050,0.145] & 0.050 \\
\tstriped
 & reflect (100\%) & 0.085 & [0.042,0.142] & 0.068 \\
\bottomrule
\end{tabular}
\end{table}

% Auto-generated by SS8_uncertainty_quantification.py
\begin{table}[t]
\centering
\caption{Per-framework 95\% uncertainty widths for entries of $\hat Q$.}
\label{tab:uq-summary}
\scriptsize
\setlength{\tabcolsep}{2pt}
\begin{tabular}{l c c c c c c}
\toprule
 & \multicolumn{3}{c}{\textbf{Posterior CI width}} & \multicolumn{3}{c}{\textbf{Bootstrap CI width}} \\
\cmidrule(lr){2-4}\cmidrule(lr){5-7}
\textbf{Framework} & \textbf{min} & \textbf{median} & \textbf{max} & \textbf{min} & \textbf{median} & \textbf{max} \\
\midrule
react~\cite{yao2023react} & 0.037 & 0.110 & 0.313 & 0.017 & 0.097 & 0.287 \\
\tstriped
reflexion~\cite{shinn2023reflexion} & 0.048 & 0.109 & 0.238 & 0.040 & 0.098 & 0.253 \\
cot\_agent~\cite{wang2024agentsurvey} & 0.029 & 0.111 & 0.337 & 0.008 & 0.084 & 0.288 \\
\tstriped
toolformer~\cite{schick2023toolformer} & 0.041 & 0.108 & 0.241 & 0.034 & 0.101 & 0.208 \\
babyagi~\cite{wang2024agentsurvey} & 0.040 & 0.113 & 0.286 & 0.035 & 0.100 & 0.291 \\
\tstriped
autogpt~\cite{wang2024agentsurvey} & 0.044 & 0.112 & 0.283 & 0.012 & 0.100 & 0.268 \\
agentbench~\cite{agentbench2024} & 0.034 & 0.116 & 0.265 & 0.032 & 0.103 & 0.250 \\
\bottomrule
\end{tabular}
\end{table}

The output is the audited model used for reliability analysis:
a fitted chain plus uncertainty and fit diagnostics. If AIC
or KS rejects the fit, the corpus should be segmented, re-featurized,
or treated as outside the absorbing-DTMC approximation. When both
checks pass, the fitted chain can be used for reliability queries with
the reported uncertainty intervals.

%% ================================================================

\section{Analytic Tools}
\label{sec:core}

Once a trace corpus has produced an accepted fitted chain, the next
question is what the chain can answer without another full benchmark
run. Three absorbing-chain tools address the deployment questions from
the introduction: reliability at a horizon, sensitivity to a local
change, and compatibility among common benchmark metrics.

\subsection{Closed-Form Reliability}

The first deployment question is reliability under a step budget:
starting at $s_0$, what is the chance of reaching $\oplus$ by step
$d$? In the running loop, this asks whether the agent reaches a
correct final answer within $d$ reasoning/tool/observation steps.

\begin{proposition}[Closed-form reliability; Kemeny--Snell~\cite{kemenysnell}]
\label{thm:closed}
Under Assumptions~\ref{A:trans}--\ref{A:init}, for every
$d\in\mathbb{N}_{\ge 0}$,
\begin{equation}
R(d) = e_{s_0}^\top(I-Q^{d})\,N\,R_\oplus,
\qquad N=(I-Q)^{-1},
\label{eq:t1finite}
\end{equation}
and
\begin{equation}
R_\infty = \lim_{d\to\infty}R(d)
         = e_{s_0}^\top N R_\oplus .
\label{eq:t1inf}
\end{equation}
\end{proposition}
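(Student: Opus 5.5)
The plan is to decompose the event $\{\tau_\oplus\le d\}$ by the exact absorption step and sum a geometric series of matrix powers. Since $\oplus$ is absorbing and reachable only from transient states, the chain absorbs into $\oplus$ at step $t$ exactly when $X_0,\dots,X_{t-1}\in\ST$ and the transition at step $t$ goes to $\oplus$. Hence
\[
\Pr[\tau_\oplus=t\mid X_0=s_0]=e_{s_0}^\top Q^{t-1}R_\oplus,\qquad t\ge 1 .
\]
I will justify this by induction on $t$, using the Markov property. The claim to carry through the induction is that the row vector of probabilities of being in each transient state at time $t-1$, without prior absorption, is $e_{s_0}^\top Q^{t-1}$. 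This holds because transitions among transient states are governed by $Q$, and absorbed mass never returns.

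Summing over $t=1,\dots,d$ gives
\[
R(d)=e_{s_0}^\top\Big(\sum_{t=0}^{d-1}Q^t\Big)R_\oplus .
\]
I then use the finite telescoping identity $(I-Q)\sum_{t=0}^{d-1}Q^t=I-Q^d$. Assumption~\ref{A:inv} lets me multiply by $N=(I-Q)^{-1}$, and since $Q$ commutes with $N$ this yields $\sum_{t=0}^{d-1}Q^t=(I-Q^d)N=N(I-Q^d)$. Substituting gives \eqref{eq:t1finite}. For $d=0$ the formula reads $e_{s_0}^\top(I-I)NR_\oplus=0$, which matches $\Pr[\tau_\oplus\le 0]=0$ because $s_0\in\ST$.

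For the limit \eqref{eq:t1inf}, Assumption~\ref{A:trans} gives $\rho(Q)<1$, so $Q^d\to 0$ by Gelfand's formula or the Jordan form. Continuity of the finite-dimensional linear map then gives $e_{s_0}^\top(I-Q^d)NR_\oplus\to e_{s_0}^\top NR_\oplus$. Independently, $R(d)$ is nondecreasing and bounded by $1$, so this limit equals $\Pr[\tau_\oplus<\infty]$, consistent with Definition~\ref{def:reliab}.

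The only step needing care is the first: identifying the transient sub-probability vector with $e_{s_0}^\top Q^{t-1}$, i.e., showing that restricting the full stochastic matrix to transient states commutes with taking powers. I will handle this through the block upper-triangular form
\[
P=\begin{pmatrix}Q & R_\oplus & R_\ominus\\ 0&1&0\\0&0&1\end{pmatrix},
\]
whose $t$-th power has upper-left block $Q^t$. Under Assumption~\ref{A:init} the start is deterministic, and replacing $e_{s_0}^\top$ by $\pi_0^\top$ changes nothing.
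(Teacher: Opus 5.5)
Your proposal is correct and follows the same route as the paper's proof sketch: decompose by the absorption step to get $R(d)=e_{s_0}^\top\sum_{t=0}^{d-1}Q^tR_\oplus$, apply the finite geometric identity with $N=(I-Q)^{-1}$, and let $Q^d\to 0$ under $\rho(Q)<1$. Your additions (the block-triangular powers of the full transition matrix, the commutation of $Q$ with $N$, and the $d=0$ check) fill in steps the paper leaves implicit.
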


\noindent\emph{Proof idea.}
The finite-horizon probability sums all paths that spend
$t=0,\ldots,d-1$ steps in transient states and then exit to success:
$R(d)=e_{s_0}^{\top}\sum_{t=0}^{d-1}Q^tR_\oplus$. The geometric
matrix identity
$\sum_{t=0}^{d-1}Q^t=(I-Q^d)(I-Q)^{-1}$ gives
\eqref{eq:t1finite}, and transience gives $Q^d\to0$, yielding
\eqref{eq:t1inf}. $N_{ij}$ is the expected number of visits to state
$j$ before absorption. Thus the closed form separates two effects an
operator can inspect: how often the agent visits each state and how
likely success is after those visits.

\subsection{Perturbation Sensitivity}

The second question is counterfactual: if a local transition changes,
how much can end-to-end reliability move? For example, a new fallback
tool may change one row of $Q$ by redirecting failed tool calls back
to planning. The fitted chain gives a first-order sensitivity
calculation before a full benchmark rerun is available.

\begin{proposition}[Perturbation sensitivity; Stewart~\cite{stewart1998}]
\label{thm:perturb}
Let $Q(\varepsilon)=Q_0+\varepsilon\Delta$ be the perturbation family
from Definition~\ref{def:perturb}. Write
$N_0=(I-Q_0)^{-1}$ and let $R_{\oplus,0}$ be the fixed success-exit
vector. Assume
$\varepsilon_{\max}\|N_0\|_\infty\|\Delta\|_\infty<1$.
For every $0\le\varepsilon\le\varepsilon_{\max}$,
\begin{equation}
\bigl|R_\infty(\varepsilon)-R_\infty(0)\bigr|
\le
\varepsilon \|N_0\|_\infty^2\|\Delta\|_\infty
        \|R_{\oplus,0}\|_\infty
 + C\varepsilon^2,
\label{eq:t2}
\end{equation}
where
\[
C=
\frac{\|N_0\|_\infty^3\|\Delta\|_\infty^2
      \|R_{\oplus,0}\|_\infty}
     {1-\varepsilon_{\max}\|N_0\|_\infty\|\Delta\|_\infty}.
\]
Moreover,
\begin{equation}
R_\infty(\varepsilon)-R_\infty(0)
=\varepsilon e_{s_0}^{\top}N_0\Delta N_0R_{\oplus,0}
 + O(\varepsilon^2).
\label{eq:t2lin}
\end{equation}
\end{proposition}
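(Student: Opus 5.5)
The plan is to use the resolvent identity together with a Neumann-series expansion. By Proposition~\ref{thm:closed}, $R_\infty(\varepsilon)=e_{s_0}^\top N(\varepsilon)R_{\oplus,0}$ with $N(\varepsilon)=(I-Q_0-\varepsilon\Delta)^{-1}$. The success-exit vector is held fixed in Definition~\ref{def:perturb}, so only the fundamental matrix moves. The whole difference is therefore
\[
R_\infty(\varepsilon)-R_\infty(0)=e_{s_0}^\top\bigl(N(\varepsilon)-N_0\bigr)R_{\oplus,0}.
\]
The failure-exit vector $R_\ominus(\varepsilon)$ never enters.

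\textbf{Step 1 (factor and expand).} Write $I-Q(\varepsilon)=(I-Q_0)(I-\varepsilon N_0\Delta)$. By submultiplicativity, $\|\varepsilon N_0\Delta\|_\infty\le\varepsilon_{\max}\|N_0\|_\infty\|\Delta\|_\infty<1$ under the stated hypothesis. Hence $I-\varepsilon N_0\Delta$ is invertible with a convergent Neumann series, and
\[
N(\varepsilon)=(I-\varepsilon N_0\Delta)^{-1}N_0=\sum_{k\ge0}\varepsilon^k(N_0\Delta)^kN_0 .
\]
This also re-derives invertibility of $I-Q(\varepsilon)$ independently of the assumption $\rho(Q(\varepsilon))<1$. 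Subtracting the $k=0$ term gives
\[
N(\varepsilon)-N_0=\varepsilon N_0\Delta N_0+\sum_{k\ge2}\varepsilon^k(N_0\Delta)^kN_0 .
\]

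\textbf{Step 2 (linear term).} Sandwiching the difference between $e_{s_0}^\top$ and $R_{\oplus,0}$ gives the first-order term $\varepsilon e_{s_0}^\top N_0\Delta N_0R_{\oplus,0}$, which is exactly \eqref{eq:t2lin}. For its magnitude, note that $|e_{s_0}^\top v|\le\|v\|_\infty$ and $\|Av\|_\infty\le\|A\|_\infty\|v\|_\infty$. This bounds the linear term by $\varepsilon\|N_0\|_\infty^2\|\Delta\|_\infty\|R_{\oplus,0}\|_\infty$, the first term of \eqref{eq:t2}.

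\textbf{Step 3 (remainder).} Put $q=\|N_0\|_\infty\|\Delta\|_\infty$. The tail is bounded entrywise in sup norm by
\[
\|R_{\oplus,0}\|_\infty\|N_0\|_\infty\sum_{k\ge2}\varepsilon^kq^k=\varepsilon^2\,\|N_0\|_\infty^3\|\Delta\|_\infty^2\|R_{\oplus,0}\|_\infty\sum_{j\ge0}(\varepsilon q)^j .
\]
Since $\varepsilon q\le\varepsilon_{\max}q<1$, the geometric sum is at most $1/(1-\varepsilon_{\max}q)$. This is precisely $C\varepsilon^2$ with the stated $C$. The constant is uniform in $\varepsilon\in[0,\varepsilon_{\max}]$, which yields \eqref{eq:t2} and the $O(\varepsilon^2)$ in \eqref{eq:t2lin}.

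\textbf{Main obstacle.} The only delicate point is getting the constant in $C$ exactly right. This means factoring as $(I-Q_0)(I-\varepsilon N_0\Delta)$ rather than $(I-\varepsilon\Delta N_0)(I-Q_0)$, so that the powers of $\|N_0\|_\infty$ come out as $3$ in $C$. It also means checking that the hypothesis $\varepsilon_{\max}\|N_0\|_\infty\|\Delta\|_\infty<1$, rather than the spectral assumption, is what licenses the series. The rest is routine norm bookkeeping.
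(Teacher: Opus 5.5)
Your proof is correct and follows the same route as the paper's sketch: a Neumann-series expansion of $(I-Q(\varepsilon))^{-1}$ around $Q_0$, giving the linear term $N_0\Delta N_0$, with the $k\ge2$ tail summed geometrically to exactly $C\varepsilon^2$. One small correction to your ``main obstacle'' remark: the order of factorization does not matter, because $(N_0\Delta)^kN_0=N_0(\Delta N_0)^k$ identically, so both factorizations produce the same series and the same constant.
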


\noindent\emph{Proof idea.}
Expand $(I-Q(\varepsilon))^{-1}$ with the Neumann series around
$Q_0$. The smallness condition makes the remainder bounded by the
displayed constant. The first-order term is $N_0\Delta N_0$. The left
factor counts how often execution reaches the changed row, and the
right factor measures success value after the perturbed transition.
If $\|N_0\|_\infty$ is large, small local changes can have large global
effects and should be checked empirically. The bound is best read as a
screening calculation: it helps identify local changes that merit a
new benchmark run. As stated, the bound applies to perturbations of
$Q$ with $R_\oplus$ fixed. Changes to success exits require the
corresponding linear term for the exit vector.

\subsection{Metric Unification}
\label{sec:unify}

The third question is semantic. pass$@k$, pass$^k$, and the RDC often
appear to be competing summaries, but under the fitted chain they are
different views of one first-passage distribution. The disagreement is
usually about which projection to report, not about the underlying
reliability event. We keep the benchmark name reliability decay curve
(RDC), but the quantity modeled here is the cumulative success
probability by horizon $d$.

\begin{theorem}[Metric unification]
\label{thm:unify}
Let $R_\infty=e_{s_0}^{\top}NR_\oplus$. Under the i.i.d.\
replication assumption A\ref{A:iid},
\begin{align}
\mathrm{pass}^{k} &= R_\infty^{k}, \label{eq:t3-passk}\\
\mathrm{pass}@k  &= 1-(1-R_\infty)^{k}, \label{eq:t3-passatk}\\
\mathrm{RDC}(d)  &= R(d)
 = e_{s_0}^{\top}(I-Q^{d})NR_\oplus. \label{eq:t3-rdc}
\end{align}
Thus the three metrics are restrictions or transformations of the
same first-passage distribution.
\end{theorem}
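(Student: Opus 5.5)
The plan is to reduce all three identities to properties of a single random variable, the success first-passage time $\tau_\oplus$ of one run started at $s_0$. Its law is taken on $\mathbb{N}\cup\{\infty\}$, where $\tau_\oplus=\infty$ encodes absorption in $\ominus$. We then read off each metric as a functional of that law.

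\textbf{Step 1: per-trial success probability.} First we identify the event ``a trial succeeds'' with $\{\tau_\oplus<\infty\}$, i.e.\ $\{\tau_\oplus<\tau_\ominus\}$, which is the same event because $\oplus$ and $\ominus$ are absorbing and mutually exclusive. By Proposition~\ref{thm:closed}, $\Pr[\tau_\oplus\le d]=R(d)=e_{s_0}^\top(I-Q^d)NR_\oplus$. Continuity of probability along the increasing events $\{\tau_\oplus\le d\}$ then gives $\Pr[\tau_\oplus<\infty]=\lim_d R(d)=R_\infty=e_{s_0}^\top NR_\oplus$. Assumption~\ref{A:trans} ensures $Q^d\to0$, so this limit exists and equals the stated closed form.

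\textbf{Step 2: RDC.} Equation \eqref{eq:t3-rdc} is definitional once we fix the convention stated in \S\ref{sec:unify}. The RDC is the cumulative success probability by horizon $d$, hence $\mathrm{RDC}(d)=\Pr[\tau_\oplus\le d]=R(d)$, and the closed form is \eqref{eq:t1finite}. It is worth noting that this is the (defective) CDF of $\tau_\oplus$, with total mass $R_\infty$.

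\textbf{Step 3: pass$^k$ and pass$@k$.} For $k$ replications, let $\tau_\oplus^{(1)},\dots,\tau_\oplus^{(k)}$ be i.i.d.\ copies by Assumption~\ref{A:iid}, each with the law from Step 1. By definition, pass$^k$ is the probability that all $k$ trials succeed, $\Pr[\bigcap_j\{\tau^{(j)}_\oplus<\infty\}]$, and independence factorizes it into $R_\infty^k$. Likewise, pass$@k$ is the probability that at least one trial succeeds. Its complement is $\bigcap_j\{\tau^{(j)}_\oplus=\infty\}$, which has probability $(1-R_\infty)^k$, yielding \eqref{eq:t3-passatk}. The same computation with $R(d)$ in place of $R_\infty$ gives horizon-truncated versions, which makes the ``restriction'' claim concrete: all three are functionals of the one distribution $d\mapsto R(d)$.

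\textbf{Main obstacle.} The mathematics is routine; the delicate step is semantic. We must pin down that the benchmark metrics are the population quantities here, not the unbiased finite-sample estimator of pass$@k$ from $n\ge k$ samples used in \cite{chen2021codex}. Those empirical estimators are unbiased for the population quantities, so we will state the theorem at the population level and remark that this is the sense of equality. We also need that ``success'' in benchmarks coincides with absorption in $\oplus$ with no horizon cap. If a cap $d$ is imposed, the same argument goes through with $R_\infty$ replaced by $R(d)$.
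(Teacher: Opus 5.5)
Your proposal is correct and follows the same route as the paper: you identify a single trial's success probability with $R_\infty$ via Proposition~\ref{thm:closed}, use independence to factor pass$^k$ and the complement of pass$@k$, and read the RDC off as the finite-horizon reliability $R(d)$. The extra steps you include (deriving $\Pr[\tau_\oplus<\infty]=R_\infty$ by continuity of probability, and separating population quantities from the finite-sample pass$@k$ estimator) only make explicit what the paper's proof sketch leaves implicit.
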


\noindent\emph{Proof idea.}
The pass$^k$ identity is the probability that all $k$ independent
trials succeed. pass$@k$ is one minus the probability that all $k$
trials fail, and the RDC is exactly the finite-horizon reliability
from Proposition~\ref{thm:closed}.

The i.i.d.\ assumption is useful, but repeated agent runs can share
latent conditions such as a common prompt template, cache state, or
task difficulty. The following result makes that caveat measurable: it
shows how shared variation changes the two repeated-trial metrics even
when the marginal reliability remains $R_\infty$.

\begin{theorem}[Correlated-trial inequalities]
\label{thm:corr}
Suppose the $k$ trials share a latent state $\xi$ with
$\Pr[X_i=1\mid\xi]=p(\xi)$ and
$\mathbb{E}_\xi[p(\xi)]=R_\infty$. Conditional on $\xi$, the trials
are independent. Then
\begin{align}
\mathrm{pass}^{k} &= \mathbb{E}_\xi[p(\xi)^k]
                  \ge R_\infty^k, \label{eq:t3p-passk}\\
\mathrm{pass}@k  &= 1-\mathbb{E}_\xi[(1-p(\xi))^k]
                  \le 1-(1-R_\infty)^k. \label{eq:t3p-passatk}
\end{align}
For $k\ge2$, equality holds iff $p(\xi)$ is almost surely constant.
\end{theorem}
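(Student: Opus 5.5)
We condition on the latent state $\xi$ and then apply Jensen's inequality to the convex maps $x\mapsto x^k$ and $x\mapsto(1-x)^k$ on $[0,1]$.

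\textbf{Step 1: conditional identities.} Given $\xi$, the indicators $X_1,\dots,X_k$ are independent with common success probability $p(\xi)$. Hence
\[
\Pr[X_1=\cdots=X_k=1\mid\xi]=p(\xi)^k
\quad\text{and}\quad
\Pr[X_1=\cdots=X_k=0\mid\xi]=(1-p(\xi))^k.
\]
Taking expectations over $\xi$ (tower property; all quantities are bounded) gives $\mathrm{pass}^k=\mathbb{E}_\xi[p(\xi)^k]$. For the second metric, $\mathrm{pass}@k=1-\Pr[\text{all fail}]=1-\mathbb{E}_\xi[(1-p(\xi))^k]$. These are exactly the equalities in \eqref{eq:t3p-passk} and \eqref{eq:t3p-passatk}. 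Marginally each $X_i$ has success probability $\mathbb{E}_\xi[p(\xi)]=R_\infty$, which matches the single-run reliability of Theorem~\ref{thm:unify}.

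\textbf{Step 2: inequalities.} For $k\ge1$, $f(x)=x^k$ is convex on $[0,1]$, so Jensen gives $\mathbb{E}[p^k]\ge(\mathbb{E}p)^k=R_\infty^k$. Likewise $g(x)=(1-x)^k$ is convex, so $\mathbb{E}[(1-p)^k]\ge(1-R_\infty)^k$. Negating and adding $1$ yields \eqref{eq:t3p-passatk}. For $k=1$ both inequalities are equalities trivially.

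\textbf{Step 3: equality case.} This is the only delicate step. For $k\ge2$, both $f$ and $g$ are strictly convex on $[0,1]$, since $f''(x)=k(k-1)x^{k-2}>0$ on $(0,1]$ and similarly for $g$. Strict convexity on an interval suffices even though $f''(0)=0$: a convex function whose second derivative vanishes at a single point is still strictly convex.

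To make the equality direction self-contained, use the tangent-line argument at $\mu=R_\infty$:
\[
f(p)\ge f(\mu)+f'(\mu)(p-\mu),
\]
with equality only at $p=\mu$ by strict convexity. Taking expectations, equality in Jensen forces the nonnegative gap $f(p)-f(\mu)-f'(\mu)(p-\mu)$ to have zero mean. It therefore vanishes almost surely, so $p(\xi)=\mu$ almost surely. The same argument applies to $g$.

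The converse is immediate: if $p(\xi)$ is almost surely constant, it equals $R_\infty$, and both expressions reduce to the i.i.d.\ formulas of Theorem~\ref{thm:unify}.

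One should also note that $p(\xi)\in[0,1]$, so the convexity domain is respected, and that measurability of $p$ is implicit in the hypothesis.
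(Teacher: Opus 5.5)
Your proposal is correct and follows the paper's own proof. Both condition on $\xi$ to obtain the product forms $p(\xi)^k$ and $(1-p(\xi))^k$, then apply Jensen's inequality to the convex maps $x\mapsto x^k$ and $x\mapsto(1-x)^k$; your strict-convexity tangent-line argument for the $k\ge2$ equality case fills in a detail the paper's sketch leaves implicit.
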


The inequalities show how hidden heterogeneity raises all-success
estimates and lowers at-least-one-success estimates relative to the
i.i.d.\ formulas.

\begin{corollary}[Diagnostic]
\label{cor:diag}
The gap
$\widehat{\mathrm{pass}^{k}}-\widehat{R_\infty}^{\,k}$ is a
lightweight diagnostic for hidden cross-trial correlation. Large
positive gaps indicate that repeated trials are not behaving as
i.i.d.\ samples from one fixed chain.
\end{corollary}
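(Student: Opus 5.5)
The plan is to read Corollary~\ref{cor:diag} as a statement about population quantities and then about their plug-in estimators. The population version follows from Theorem~\ref{thm:corr}: the gap $\Gamma_k:=\mathrm{pass}^k-R_\infty^k=\mathbb{E}_\xi[p(\xi)^k]-(\mathbb{E}_\xi p(\xi))^k$ is a Jensen gap. It is nonnegative because $x\mapsto x^k$ is convex on $[0,1]$, and for $k\ge2$ it is zero iff $p(\xi)$ is almost surely constant. That last condition is exactly the i.i.d.\ regime of Assumption~\ref{A:iid}, in which Theorem~\ref{thm:unify} gives $\mathrm{pass}^k=R_\infty^k$. So under the null ``repeated trials are i.i.d.\ draws from one fixed chain'' we have $\Gamma_k=0$, and under latent heterogeneity $\Gamma_k>0$. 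This sign structure is what makes the gap a one-sided diagnostic.

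Next I would quantify ``large'' by making $\Gamma_k$ monotone in the heterogeneity. For $k=2$ the gap is exactly $\mathrm{Var}_\xi(p(\xi))$. For general $k$, I would write $p=R_\infty+\delta$ with $\mathbb{E}\delta=0$ and expand binomially, which gives $\Gamma_k=\binom{k}{2}R_\infty^{k-2}\mathrm{Var}(p)+O(\mathbb{E}|\delta|^3)$. For a clean non-asymptotic statement I would instead use the strong-convexity lower bound. Since $x^k$ has second derivative $k(k-1)x^{k-2}$, Taylor's theorem with remainder gives $\Gamma_k\ge\tfrac12 k(k-1)\min_x x^{k-2}\,\mathrm{Var}(p)$ on the relevant range. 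A cruder but valid choice is to apply the bound only for $k=2$ and note that $\Gamma_k$ is nondecreasing in the dispersion of $p$ in convex order. Either way, a positive gap is evidence of cross-trial variance in the per-trial success probability.

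Finally I would pass to estimators. Here $\widehat{R_\infty}=e_{s_0}^\top\hat N\hat R_\oplus$ comes from the fitted chain, and $\widehat{\mathrm{pass}^k}$ is the empirical fraction of $k$-blocks that are all successes. Under the i.i.d.\ null both are consistent: the former by consistency of the Laplace-smoothed MLE together with the Lipschitz continuity of $R_\infty$ in $(\hat Q,\hat R_\oplus)$ from Proposition~\ref{thm:perturb}; the latter by the law of large numbers. By the continuous mapping theorem, the estimated gap then converges to $0$ under the null and to $\Gamma_k>0$ under heterogeneity. To calibrate ``large'' I would compare the observed gap with a bootstrap or Dirichlet-posterior band for $\widehat{R_\infty}^{\,k}$ from \S\ref{sec:uq}.

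I expect this calibration step to be the main obstacle. The two estimators use different data, namely pooled single-step transitions versus grouped repeated trials, so their sampling errors are not jointly controlled by any earlier result. The first thing I would try is a trace-level bootstrap that resamples whole $k$-trial groups and recomputes both quantities together. That yields a null-referenced interval for the gap without needing an analytic variance formula.
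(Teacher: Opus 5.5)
Your first paragraph is the paper's argument in full. The paper supports the corollary only by reading $\mathrm{pass}^k-R_\infty^k=\mathbb{E}_\xi[p(\xi)^k]-(\mathbb{E}_\xi p(\xi))^k$ as the Jensen gap of Theorem~\ref{thm:corr}. That gap is nonnegative and, for $k\ge2$, zero iff $p(\xi)$ is almost surely constant, i.e.\ iff the trials are i.i.d. Your variance quantification and estimator-level layer are additions the paper does not attempt. They are sound in outline, but one step needs a different justification.

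You obtain consistency of $\widehat{R_\infty}$ from consistency of the MLE for a single true chain, and such a chain exists only under the null. Under a latent mixture there is no one chain for $\hat Q$ to converge to. Your argument therefore gives the limit $0$ under the null, but not the claimed limit $\Gamma_k$ under heterogeneity.

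The claim is nevertheless true, via an algebraic identity. Take the unsmoothed MLE on complete traces that all start in $s_0$. Every visit to state $i$ is followed by exactly one counted transition or exit, so the visit-count vector satisfies $V^\top=n\,e_{s_0}^\top+V^\top\hat Q$. Hence $V^\top=n\,e_{s_0}^\top\hat N$ and
\[
e_{s_0}^\top\hat N\hat R_\oplus=\frac{1}{n}\sum_i c_{i,\oplus},
\]
which is the empirical success fraction, whether or not the traces are Markov. Laplace smoothing and light censoring add only small corrections. So, up to those corrections, $\widehat{R_\infty}\to\mathbb{E}_\xi[p(\xi)]$ under both hypotheses by the law of large numbers over independent groups. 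This also makes the appeal to Proposition~\ref{thm:perturb} unnecessary; that result in any case holds $R_\oplus$ fixed.

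The same identity removes what you call the main obstacle. Both terms of the gap are functions of the same per-trace outcomes. The estimator is asymptotically the fraction of all-success $k$-groups minus the $k$-th power of the overall success fraction. That is a smooth function of group means, which the delta method or a group bootstrap calibrates directly.

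Resampling whole groups, however, gives a confidence interval for $\Gamma_k$, not a null reference. You can test by asking whether that interval excludes $0$, or permute trials across groups to obtain a genuine null distribution.

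Finally, your bound with $\min x^{k-2}$ is vacuous for $k\ge3$ whenever $p(\xi)$ can approach $0$. The integral form of Taylor's remainder about $R_\infty$ gives instead $\Gamma_k\ge(k-1)R_\infty^{k-2}\,\mathrm{Var}(p(\xi))$, valid for every law of $p(\xi)$.
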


Large gaps should trigger inspection before repeated samples are
treated as independent trials. For reliability evaluation, the point is
practical: an assumption behind pass$@k$ and pass$^k$ becomes a
diagnostic rather than a hidden convention.

\noindent\emph{Proof idea.}
Condition on the latent variable and apply Jensen's inequality to
the convex functions $p^k$ and $(1-p)^k$.

For an accepted fitted chain, these results give horizon, perturbation,
and metric-reconciliation answers.

\section{Extensions}
\label{sec:ext}

The core tools answer immediate deployment questions. Three extensions
help interpret the answers: why an RDC may saturate, when a
classical reliability-growth curve emerges, and how many steps are
needed before $R(d)$ is close to $R_\infty$.

\subsection{RDC Shape}

An RDC is more useful when its shape has an explanation. In agent
traces, a concave curve suggests that early steps resolve the easier
cases and later steps add less new success probability. A convex
window suggests a barrier, such as needing a tool response before the
agent can make progress. The following condition captures the common
early-gains/saturation pattern.

\begin{assumption}[Stochastic monotonicity of $Q$]\label{A:monotone}
Fix an order $\preceq$ on the transient states. The matrix $Q$ is
\emph{stochastically monotone} if every non-increasing function
$f:\ST\to\mathbb{R}_{\ge0}$ remains non-increasing after one step,
i.e., $Qf$ is non-increasing whenever $f$ is. Equivalently, later
rows of $Q$ place stochastically more mass on later states in the
order~\cite{daley1968stochastic,keilsonKester1977}.
\end{assumption}

Reliability cannot decrease with more steps, and concavity follows when
success increments shrink over time.

\begin{proposition}[RDC shape; Keilson--Kester~\cite{keilsonKester1977},
Karlin--Rinott~\cite{karlin1980tp2}]
\label{thm:shape}
Let $v=e_{s_0}^{\top}$ and $w=R_\oplus$. For
$R(d)=v(I-Q^d)NR_\oplus$,
\begin{equation}
\Delta R(d)=R(d+1)-R(d)=vQ^dw\ge0,
\label{eq:t4-1stdiff}
\end{equation}
and, for $d\ge1$,
\begin{equation}
\Delta^2R(d)=R(d+1)-2R(d)+R(d-1)
           =-vQ^{d-1}(I-Q)w.
\label{eq:t4-2nddiff}
\end{equation}
If A\ref{A:monotone} holds, $w$ is non-increasing in the state order,
and $Qw\le w$ componentwise, then $R(d)$ is globally concave.
\end{proposition}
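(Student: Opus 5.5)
Everything reduces to the closed form of Proposition~\ref{thm:closed} together with the series representation $R(d)=v\sum_{t=0}^{d-1}Q^t w$ used in its proof idea, where $v=e_{s_0}^\top$ and $w=R_\oplus$. We first derive the two difference identities, then show that the sign conditions make the increments non-increasing.

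\emph{First difference.} Using the series representation,
\[
R(d+1)-R(d)=vQ^dw .
\]
Equivalently, from \eqref{eq:t1finite}, $R(d+1)-R(d)=v(Q^d-Q^{d+1})Nw=vQ^d(I-Q)Nw=vQ^dw$, since $Q$ commutes with $N=(I-Q)^{-1}$. Non-negativity holds because $Q$, $v$, and $w$ are entrywise non-negative (Definition~\ref{def:amc}), so $vQ^dw\ge0$. This gives \eqref{eq:t4-1stdiff}.

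\emph{Second difference.} Subtract consecutive first differences:
\[
\Delta^2R(d)=\Delta R(d)-\Delta R(d-1)=vQ^dw-vQ^{d-1}w=-vQ^{d-1}(I-Q)w,
\]
valid for $d\ge1$. This is \eqref{eq:t4-2nddiff}.

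\emph{Concavity.} Global concavity of $R$ on $\mathbb{N}_{\ge0}$ means $\Delta^2R(d)\le0$ for all $d\ge1$, i.e.\ $vQ^{d-1}(I-Q)w\ge0$. Put $u=(I-Q)w=w-Qw$. The hypothesis $Qw\le w$ gives $u\ge0$ componentwise. Since $Q^{d-1}$ is entrywise non-negative, $Q^{d-1}u\ge0$, and $v\ge0$ then yields $vQ^{d-1}u\ge0$, so $\Delta^2R(d)\le0$ as required.

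The sign argument therefore needs only $Qw\le w$ and non-negativity of $Q$; stochastic monotonicity (A\ref{A:monotone}) and the ordering of $w$ are not used in it. If one instead wants the stronger structural reading, that the increments $vQ^dw$ form a non-increasing sequence because $Q^dw$ stays non-increasing in the state order and decreases in $d$, then A\ref{A:monotone} applied inductively shows that $Q^dw$ is non-increasing in the order. Combined with $Q^{d+1}w=Q^d(Qw)\le Q^dw$, which again needs only positivity of $Q$, this gives monotone decrease in $d$. I would include this as a remark explaining the role of the extra hypotheses rather than as a load-bearing step.

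\emph{Main obstacle.} The hardest point is interpretive rather than computational. We must be sure that $Qw\le w$ is stated componentwise, so that positivity of $Q^{d-1}$ applies, and we must reconcile the redundant monotonicity hypotheses with the proof without overclaiming. For instance, we should not assert that A\ref{A:monotone} alone implies $Qw\le w$: mass leaking to $\ominus$ can make $Qw$ smaller or larger depending on the rows.
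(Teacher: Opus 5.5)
Your proof is correct and follows the paper's own argument: the first difference is the exit-at-step-$d$ probability $vQ^dw\ge0$, the second-difference identity comes from subtracting consecutive increments, and nonnegativity of $vQ^{d-1}$ together with $Qw\le w$ gives $\Delta^2R(d)\le0$. Like you, the paper treats stochastic monotonicity and the ordering of $w$ only as the structural reason to expect this shrinkage, not as steps the sign argument relies on.
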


\noindent\emph{Proof idea.}
The first difference is the probability of exiting to success exactly
after $d$ transient steps. The second difference compares successive
exit probabilities. Because $vQ^{d-1}$ is nonnegative, the displayed
identity shows that $Qw\le w$ is enough to make the second difference
nonpositive. Stochastic monotonicity and the ordering of $w$ provide
the structural condition under which this shrinkage is expected.
A convex empirical window instead signals an early barrier or branch
where more steps initially increase eventual success.

\subsection{NHPP Rare-Failure Limit}

The absorbing-chain view also explains when aggregate
reliability-growth curves appear from step-level behavior. This is not
a separate modeling story: it is a limiting case of the fitted-chain
view. The limit isolates the rare-failure regime where a one-state
chain reduces to Goel--Okumoto.

\begin{proposition}[NHPP rare-failure limit; Goel--Okumoto~\cite{goelokt}]
\label{thm:nhpp}
Consider a sequence of one-transient-state agent chains with per-step
success probability $\mu_n$, per-step failure probability
$\varepsilon_n$, and
$Q_n=1-\mu_n-\varepsilon_n$. Suppose
\[
\begin{gathered}
\mu_n\to0,\qquad \varepsilon_n\to0,\qquad
d_n\mu_n\to\Lambda\in(0,\infty),\\
\varepsilon_n/\mu_n\to0.
\end{gathered}
\]
For any fixed $c>0$ and $d=cd_n$, the cumulative first-passage CDF
converges to
\begin{equation}
R_n(d)\to 1-e^{-c\Lambda}.
\label{eq:t5-GO}
\end{equation}
This is the Goel--Okumoto NHPP mean-value form
$m(t)=a(1-e^{-bt})$ with normalized $a=1$.
\end{proposition}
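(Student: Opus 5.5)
For a single transient state, Proposition~\ref{thm:closed} gives an explicit scalar formula. I will compute $R_n(d)$ from it and then pass to the limit. With $Q_n=1-\mu_n-\varepsilon_n$, $N_n=(\mu_n+\varepsilon_n)^{-1}$ and $R_{\oplus}=\mu_n$, equation~\eqref{eq:t1finite} becomes
\[
R_n(d)=\frac{\mu_n}{\mu_n+\varepsilon_n}\Bigl(1-(1-\mu_n-\varepsilon_n)^{d}\Bigr).
\]
Assumption~\ref{A:trans} holds for large $n$, since $0<\mu_n+\varepsilon_n<1$ eventually. Because $d=cd_n$ need not be an integer, I read $d$ as $\lfloor c d_n\rfloor$ (or $\lceil\cdot\rceil$). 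The rounding changes the exponent by at most one, and a single extra factor $(1-\mu_n-\varepsilon_n)^{\pm1}\to1$ does not affect the limit.

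The first factor is easy: $\mu_n/(\mu_n+\varepsilon_n)=1/(1+\varepsilon_n/\mu_n)\to1$ by the hypothesis $\varepsilon_n/\mu_n\to0$. For the second factor I take logarithms. Write $h_n=\mu_n+\varepsilon_n\to0$. Then
\[
\log(1-h_n)^{cd_n}=cd_n\log(1-h_n)=-cd_nh_n\bigl(1+O(h_n)\bigr),
\]
using $\log(1-x)=-x+O(x^2)$ for small $x$. Next,
\[
d_nh_n=d_n\mu_n\bigl(1+\varepsilon_n/\mu_n\bigr)\to\Lambda,
\]
so the exponent tends to $-c\Lambda$. Continuity of the exponential then gives $(1-h_n)^{cd_n}\to e^{-c\Lambda}$. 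Multiplying the two limits yields \eqref{eq:t5-GO}.

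The only step that needs care is the logarithmic expansion. I must check that the quadratic error $cd_n\,O(h_n^2)=O(d_nh_n\cdot h_n)$ vanishes. It does, because $d_nh_n$ stays bounded and $h_n\to0$. I would make this rigorous with the two-sided bound $-x-x^2\le\log(1-x)\le -x$ for $x\in[0,1/2]$. I also want to note where the condition $\varepsilon_n/\mu_n\to0$ is used. It is what makes the failure mass negligible both in the prefactor and in the exponent, so the limiting amplitude is $a=1$. Without it the limit would be $\frac{1}{1+\kappa}\bigl(1-e^{-c\Lambda(1+\kappa)}\bigr)$ for $\varepsilon_n/\mu_n\to\kappa$. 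Finally, setting $t=c$ and $b=\Lambda$ identifies the limit with the Goel--Okumoto mean-value function $m(t)=a(1-e^{-bt})$ with $a=1$.
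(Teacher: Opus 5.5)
Your proposal is correct and follows the paper's argument. Like the paper, you specialize Proposition~\ref{thm:closed} to the one-state closed form $R_n(d)=\frac{\mu_n}{\mu_n+\varepsilon_n}\bigl(1-(1-\mu_n-\varepsilon_n)^{d}\bigr)$, send the prefactor to one using $\varepsilon_n/\mu_n\to0$, and send the geometric term to $e^{-c\Lambda}$; you also fill in details the paper's sketch omits, namely the integer rounding of $cd_n$ and an explicit bound on the logarithmic remainder.
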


\noindent\emph{Proof idea.}
For the one-state chain,
$R_n(d)=\frac{\mu_n}{\mu_n+\varepsilon_n}
\left(1-(1-\mu_n-\varepsilon_n)^d\right)$. Under rare-failure
scaling, the prefactor tends to one and the geometric term tends to
$e^{-c\Lambda}$. Outside this low-error regime, the exact closed form
from Proposition~\ref{thm:closed} should be used.

\subsection{Approach-Rate Bound}

Even when $R_\infty$ is known, a practitioner still needs a step
budget. The next result converts spectral decay of $Q$ into a
conservative horizon rule: how many more agent steps are needed before
the remaining reliability gap is at most $\delta$?

\begin{proposition}[Approach-rate bound; Horn--Johnson~\cite{hornjohnson2013},
Stewart~\cite{stewart1998}, Levin--Peres~\cite{levin2017markov}]
\label{thm:mixing}
Let $\rho=\rho(Q)<1$ and $N=(I-Q)^{-1}$. If
$Q=V\Lambda V^{-1}$ is diagonalizable, then
\begin{equation}
R_\infty-R(d)
\le
\kappa_\infty(V)\rho^d\|NR_\oplus\|_\infty,
\label{eq:t6-diag}
\end{equation}
where $\kappa_\infty(V)=\|V\|_\infty\|V^{-1}\|_\infty$. Therefore a
sufficient horizon for error at most $\delta$ is
\begin{equation}
d^\star_{\mathrm{diag}}(\delta)=
\max\!\left\{0,\left\lceil
\frac{\log(\kappa_\infty(V)\|NR_\oplus\|_\infty/\delta)}
     {-\log\rho}
\right\rceil\right\}.
\label{eq:t6-steps-diag}
\end{equation}
For non-diagonalizable $Q$, the same exponential term is multiplied
by a polynomial factor determined by the largest Jordan block.
\end{proposition}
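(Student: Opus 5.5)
Starting from Proposition~\ref{thm:closed}, subtracting \eqref{eq:t1finite} from \eqref{eq:t1inf} gives
\[
R_\infty-R(d)=e_{s_0}^{\top}Q^{d}NR_\oplus .
\]
This expression is nonnegative because $Q^d\ge0$ and $NR_\oplus\ge0$, so bounding the gap from above is enough. Hölder's inequality with the $\ell_1$ norm of $e_{s_0}^\top$ (equal to one) and the $\ell_\infty$ norm of the vector $Q^dNR_\oplus$ gives
\[
R_\infty-R(d)\le\|Q^dNR_\oplus\|_\infty\le\|Q^d\|_\infty\|NR_\oplus\|_\infty,
\]
where $\|\cdot\|_\infty$ on matrices is the induced (maximum row-sum) norm. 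Everything therefore reduces to bounding $\|Q^d\|_\infty$.

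In the diagonalizable case I would write $Q^d=V\Lambda^dV^{-1}$ and use submultiplicativity to get $\|Q^d\|_\infty\le\|V\|_\infty\|\Lambda^d\|_\infty\|V^{-1}\|_\infty$. Since $\Lambda$ is diagonal, $\|\Lambda^d\|_\infty=\max_i|\lambda_i|^d=\rho^d$. This yields \eqref{eq:t6-diag}, namely $R_\infty-R(d)\le\kappa_\infty(V)\rho^d\|NR_\oplus\|_\infty$. The eigenvectors may be complex, but the induced norms are defined the same way over $\mathbb{C}$, and the left-hand side is real, so the argument is unaffected.

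For the horizon rule \eqref{eq:t6-steps-diag}, I would solve $\kappa_\infty(V)\rho^d\|NR_\oplus\|_\infty\le\delta$ for $d$. Taking logarithms and dividing by $-\log\rho>0$ (valid since $0<\rho<1$) gives $d\ge\log(\kappa_\infty(V)\|NR_\oplus\|_\infty/\delta)/(-\log\rho)$. Taking the ceiling makes $d$ an integer, and the $\max\{0,\cdot\}$ handles the case where the bound already holds at $d=0$. Two degenerate cases need separate treatment: if $\rho=0$, then $Q$ is diagonalizable with all eigenvalues zero, so $Q=0$ and the gap vanishes for every $d\ge1$. If $NR_\oplus=0$, the bound is trivial.

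The non-diagonalizable statement is the only delicate step. I would use the Jordan form $Q=VJV^{-1}$ and write each block as $\lambda I+E$ with $E$ nilpotent of index $m$. The binomial expansion then gives
\[
\|(\lambda I+E)^d\|_\infty\le\sum_{j<m}\binom{d}{j}\rho^{d-j}\le p_m(d)\rho^{d},
\]
where $p_m$ is a polynomial of degree $m-1$, with constant $\rho^{-(m-1)}$ when $\rho>0$. This produces the claimed polynomial-times-$\rho^d$ factor. The main obstacle is keeping the constant explicit when $\rho$ is small. As a fallback I would cite Gelfand's formula or Horn--Johnson, which give $\|Q^d\|\le C_\eta(\rho+\eta)^d$ for any $\eta>0$.
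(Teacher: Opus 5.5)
Your proposal is correct and follows the paper's own argument. You reduce the gap to $e_{s_0}^{\top}Q^dNR_\oplus$ via Proposition~\ref{thm:closed}, bound it by $\|Q^d\|_\infty\|NR_\oplus\|_\infty$, get $\|Q^d\|_\infty\le\kappa_\infty(V)\rho^d$ from diagonalization, and use the Jordan form for the polynomial factor, exactly as the paper's sketch does; your binomial expansion of $(\lambda I+E)^d$ and the degenerate-case remarks fill in details that sketch leaves implicit.
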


\noindent\emph{Proof idea.}
The gap is
$R_\infty-R(d)=e_{s_0}^{\top}Q^dNR_\oplus$, so bounding the gap is
the same as bounding $\|Q^d\|$. Diagonalization gives
$\|Q^d\|_\infty\le\kappa_\infty(V)\rho^d$. Jordan blocks add a
polynomial multiplier. Large $\rho(Q)$ or ill-conditioned eigenvectors
then imply longer step budgets.

These checks connect finite-horizon reliability, classical
reliability growth, and conservative step budgets within one
fitted-chain view.

%% ================================================================

\section{Simulation Validation}
\label{sec:sim}

Before applying the fitted chain to benchmark-shaped data, we check the
mechanics in controlled settings. Simulation study SS1 compares the
closed-form reliability formula with Monte Carlo simulation.
Figure~\ref{fig:analytic_checks} tests the perturbation and
correlated-trial claims, Figure~\ref{fig:goel_okumoto_limit} tests the
Goel--Okumoto rare-failure limit, and simulation study SS7 checks that
the GoF safeguard accepts and rejects controlled corpora for the right
reasons. These simulations do not validate unseen operational traces by
themselves. They show that the computation, analytic claims, and
diagnostics behave as intended on controlled inputs.

\subsection{SS1: Closed-Form vs.\ Monte Carlo}

SS1 isolates the numerical calculation behind the reliability estimates.
Before fitting traces, choosing clusters, or applying the GoF protocol,
we check whether the closed-form absorbing-chain expression for eventual
success agrees with direct simulation.

We generate 500 random substochastic chains per size
$m \in \{5, 10, 50, 100, 500\}$ and compare $R_\infty$
(Proposition~\ref{thm:closed}) with Monte Carlo estimates from $10^5$
trajectories per chain. Monte Carlo is used as an independent simulation
baseline for the same fitted chain, not as a competing estimator for
agent reliability.

\begin{figure}[!t]
  \centering
  \includegraphics[width=\columnwidth]{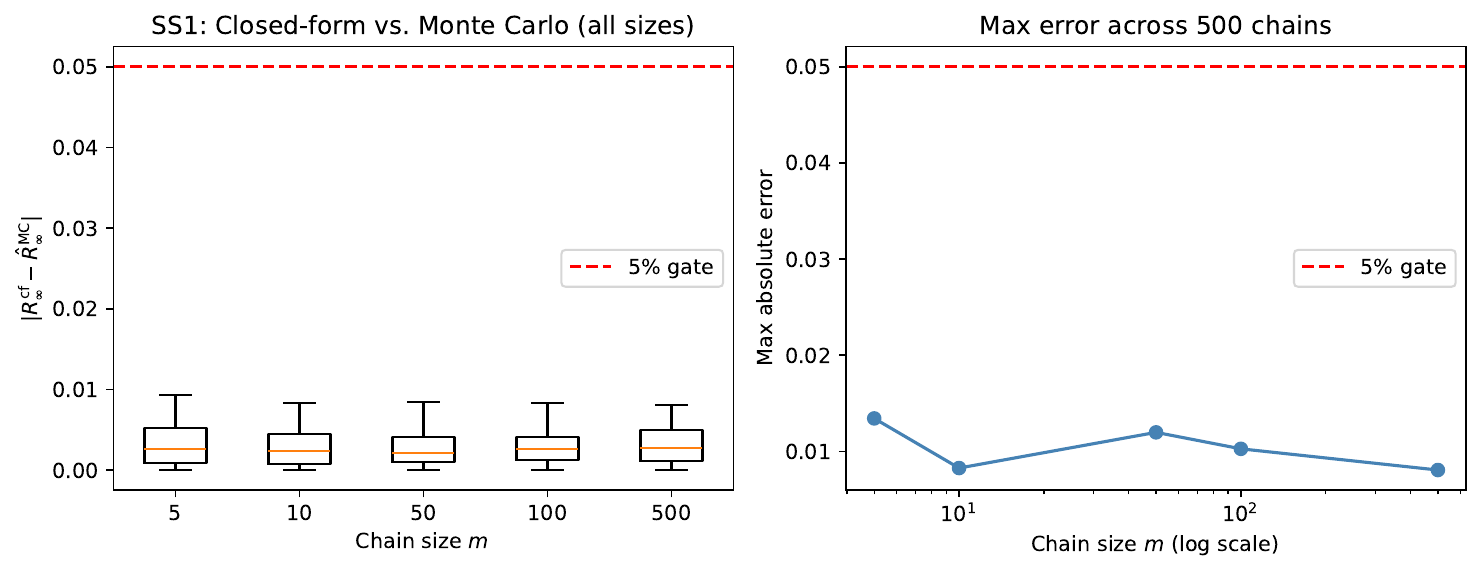}
  \caption{SS1 closed-form validation: absolute error
           $|R_\infty^{\rm cf} - \hat{R}_\infty^{\rm MC}|$ between
           analytic eventual-success reliability and Monte Carlo
           estimates across 500 chains per size. All errors are below
           the 5\% G2 threshold.}
  \label{fig:ss1}
\end{figure}

The agreement stays within threshold across all tested state-space
sizes. SS1 is a narrow but necessary check: it supports the
closed-form reliability computations used later in SS6 and SS9, while
leaving the Markov-fit question to the GoF protocol and held-out
validation.

Figure~\ref{fig:analytic_checks} checks how the fitted chain should be
used. Panel (a) supports the intended use of
Proposition~\ref{thm:perturb}: the perturbation bound screens local
changes by their direction and scale before a full rerun. Panels (b)
and (c) show that repeated attempts require an explicit dependence
assumption. Agreement, pass$^k$, and the independent-trial calculation
are different projections of first-passage behavior, not
interchangeable scalars.

\begin{figure}[!t]
  \centering
  \begin{minipage}[t]{0.32\columnwidth}
    \centering
    \includegraphics[trim=0bp 0bp 350bp 0bp,clip,width=\linewidth,height=0.075\textheight,keepaspectratio]{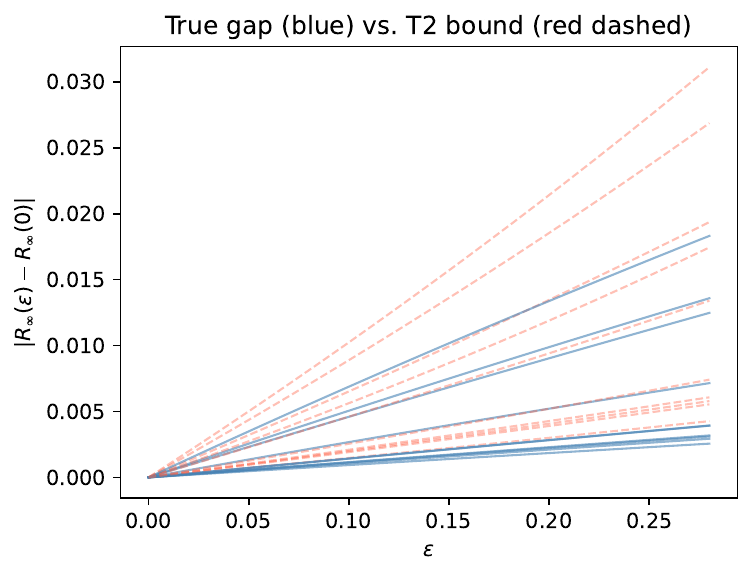}\\[-0.3ex]
    {\scriptsize (a) Perturbation}
  \end{minipage}\hfill
  \begin{minipage}[t]{0.32\columnwidth}
    \centering
    \includegraphics[trim=0bp 0bp 365bp 0bp,clip,width=\linewidth,height=0.075\textheight,keepaspectratio]{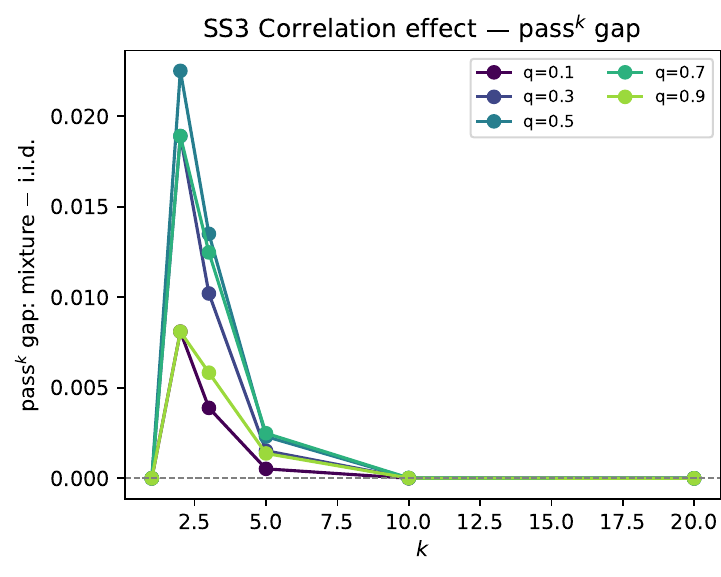}\\[-0.3ex]
    {\scriptsize (b) Agreement}
  \end{minipage}\hfill
  \begin{minipage}[t]{0.32\columnwidth}
    \centering
    \includegraphics[trim=365bp 0bp 0bp 0bp,clip,width=\linewidth,height=0.075\textheight,keepaspectratio]{fig_correlation_gap.pdf}\\[-0.3ex]
    {\scriptsize (c) pass$^k$}
  \end{minipage}
  \caption{Additional theorem checks retained in the artifact:
           (a) perturbation bound behavior, (b) correlated-trial
           agreement lift, and (c) pass$^k$ gap.}
  \label{fig:analytic_checks}
\end{figure}

Figure~\ref{fig:goel_okumoto_limit} keeps the Goel--Okumoto check
separate because the source plot contains six scaling regimes. The
insight is conceptual as much as numerical: when per-step absorption
probabilities become small and the horizon is rescaled, the
first-passage curve of the fitted chain approaches the classical NHPP
mean-value form. Thus Goel--Okumoto appears as a limiting case of the
agent-chain model, not as an unrelated historical baseline.

\begin{figure}[!t]
  \centering
  \includegraphics[width=\columnwidth,height=0.16\textheight,keepaspectratio]{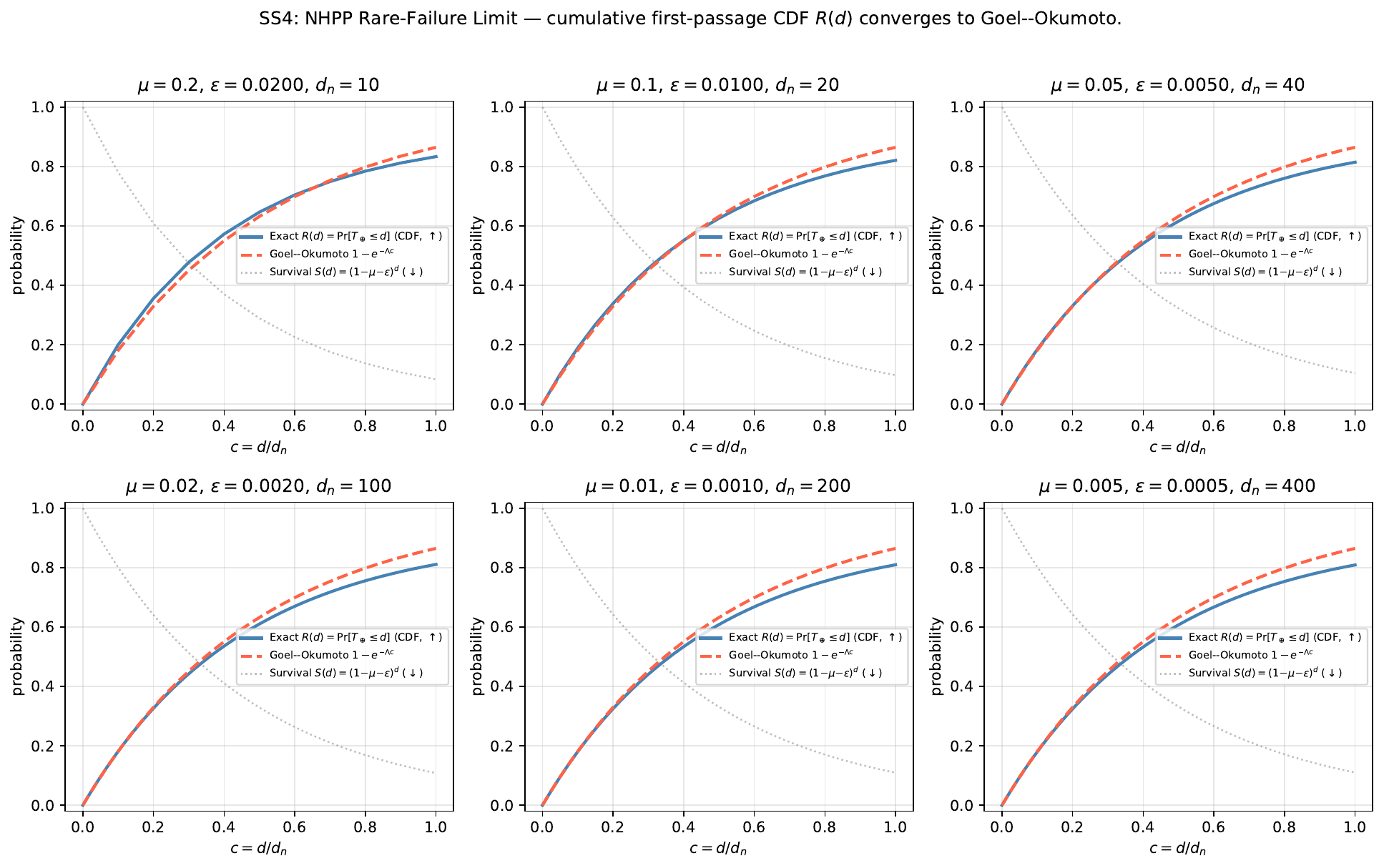}
  \caption{Goel--Okumoto rare-failure limit check: the cumulative
           first-passage distribution of the fitted-chain model
           approaches the NHPP mean-value form across six scaling
           regimes.}
  \label{fig:goel_okumoto_limit}
\end{figure}

\subsection{SS7: Goodness-of-Fit of the Agent-DTMC Assumption}
\label{sec:ss7}

SS7 tests the goodness-of-fit protocol of \S\ref{sec:gof}
(Algorithm~\ref{alg:gof}) on three controlled conditions. The goal is
not to make every trace Markov, but to separate corpora where the
fitted-chain abstraction is credible from those where it is not:

\textbf{(A) Markov ground truth.} On $30$ corpora of $300$ traces from
a known 1st-order absorbing chain ($m=5$), the composite test retains
the null in \textbf{$100\%$} of corpora at $\alpha=0.05$, confirming
Type-I control.

\textbf{(B) 2nd-order ground truth.} On $15$ corpora from a 2nd-order
chain with mixture weight $0.6$, the FPT-KS layer alone rejects $0\%$,
while the AIC layer rejects $100\%$ with
$\Delta_{\rm AIC}\in[-1540,-1310]$. The composite accept-if-both-pass
rule rejects $100\%$.

\textbf{(C) MAST-derived self-consistency.} We sample 500 synthetic
traces from each of the 7 MAST-derived chains used in \S\ref{sec:mast}.
The composite test accepts all 7 frameworks (KS $p$-values
$\in\{0.520, 0.583, 0.852, 0.944, 0.947, 0.957, 0.966\}$,
$\Delta_{\rm AIC}$ small-positive). Because these traces are sampled
from the fitted chain, this checks internal consistency, not whether
real agent traces are Markov. The held-out study then checks whether
the full trace-to-chain pipeline recovers first-passage behavior on
controlled MAST-style traces it did not fit.

\begin{figure}[!t]
  \centering
  \includegraphics[width=\columnwidth]{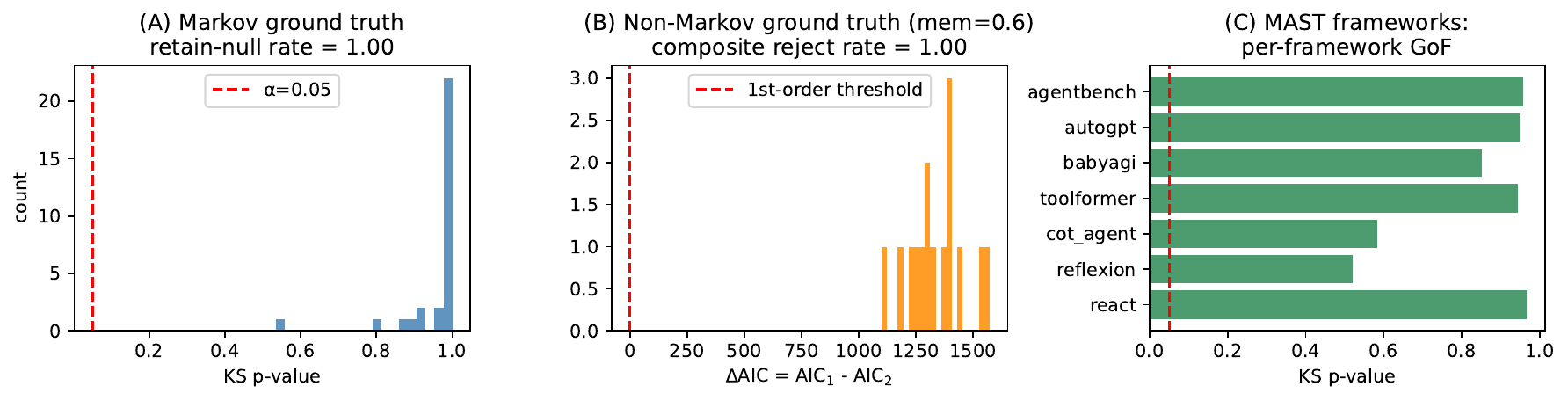}
  \\[-0.3ex]
  \begin{minipage}[t]{0.32\columnwidth}
    \centering
    {\scriptsize (a) Markov ground truth: KS $p$-values}
  \end{minipage}\hfill
  \begin{minipage}[t]{0.32\columnwidth}
    \centering
    {\scriptsize (b) 2nd-order ground truth: AIC rejection}
  \end{minipage}\hfill
  \begin{minipage}[t]{0.32\columnwidth}
    \centering
    {\scriptsize (c) MAST-derived: self-consistency KS}
  \end{minipage}
  \caption{SS7 tests the GoF safeguard across Markov ground truth,
           2nd-order ground truth, and MAST-derived self-consistency
           conditions.}
  \label{fig:gof}
\end{figure}

As a validation ladder, SS1 checks numerical computation,
Figures~\ref{fig:analytic_checks} and~\ref{fig:goel_okumoto_limit}
check analytic claims about perturbations, correlated trials, and
rare-failure limits, and Figure~\ref{fig:gof} checks the GoF
safeguard. The validation ladder separates numerical correctness,
analytic behavior, diagnostic behavior, and full held-out recovery of
MAST-style first-passage behavior.

%% ================================================================

\section{Empirical Case Studies}
\label{sec:case_studies}

The empirical evidence serves two different purposes. Simulation study
SS6 illustrates the reliability quantities on MAST-derived framework
summaries, so its role is descriptive. Simulation study SS9 uses a
strict held-out split on controlled MAST-style traces to test whether
\textsc{TraceToChain} recovers the RDC and success-time distribution on
unseen traces. This separation matters: SS6 shows what the reliability
vocabulary reports, while SS9 tests recovery under a known
data-generating process.

\subsection{MAST Benchmark Formulation}
\label{sec:mast}

SS6 uses transition matrices derived from public MAST summaries for
seven frameworks, applies the fitted-chain quantities, and ranks
frameworks by $R_\infty$. The exercise is illustrative: it does not
claim that raw MAST traces are Markov, but it shows what the
reliability vocabulary reports once a chain is available.
Figure~\ref{fig:rdc} shows each RDC.

\begin{figure}[!t]
  \centering
  \includegraphics[width=\columnwidth]{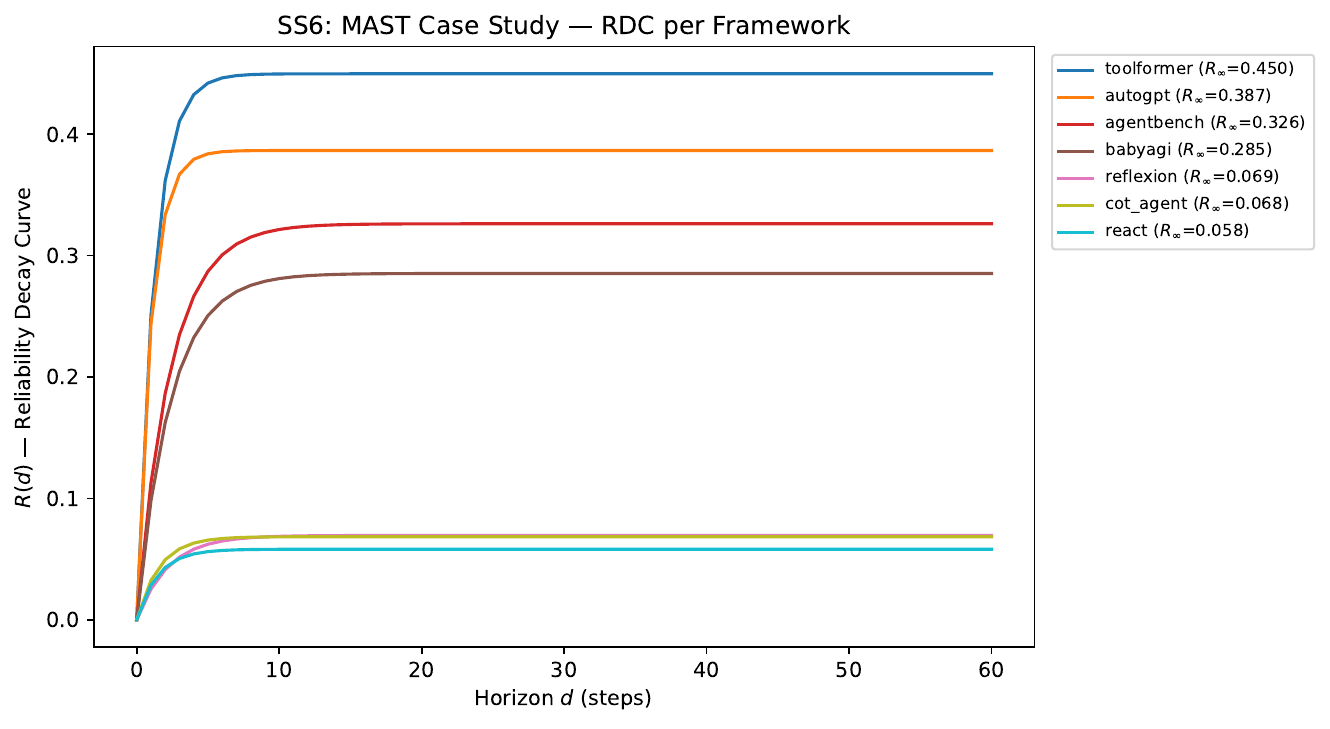}
  \caption{SS6 illustrates finite-horizon reliability on MAST-derived
           summaries: reliability decay curves for 7 frameworks ranked
           by $R_\infty$.}
  \label{fig:rdc}
\end{figure}

Table~\ref{tab:mast} summarizes the closed-form reliability features.
Toolformer has the highest asymptotic reliability in this illustrative
MAST-derived summary, while react and reflexion have lower overall
success.

\begin{table}[!t]
\centering
\caption{Closed-Form Reliability of MAST Frameworks.}
\label{tab:mast}
\begin{tabular}{lrrrr}
\toprule
\textbf{Framework} & \textbf{$R_\infty$} & \textbf{$\rho(Q)$} & \textbf{Horizon ($\delta=0.01$)} & \textbf{$m$} \\
\midrule
\thighlight
toolformer~\cite{schick2023toolformer}  & 0.4497 & 0.442 &  5 & 10 \\
\tstriped
autogpt~\cite{wang2024agentsurvey}     & 0.3866 & 0.370 &  4 &  5 \\
agentbench~\cite{agentbench2024}  & 0.3262 & 0.654 &  9 & 10 \\
\tstriped
babyagi~\cite{wang2024agentsurvey}     & 0.2853 & 0.656 &  8 & 12 \\
reflexion~\cite{shinn2023reflexion}   & 0.0695 & 0.636 &  5 & 12 \\
\tstriped
cot\_agent~\cite{wang2024agentsurvey}  & 0.0684 & 0.526 &  3 & 10 \\
react~\cite{yao2023react}       & 0.0581 & 0.509 &  3 &  9 \\
\bottomrule
\end{tabular}
\end{table}

\subsection{SS9: Held-Out Empirical Validation}
\label{sec:ss9}

Simulation study SS7(C) is an in-sample self-consistency check. SS9
instead evaluates controlled held-out recovery on MAST-style traces
with a strict fit/test protocol. The test set is not used to choose
features, clusters, transition estimates, or tuning, so the reported KS
and RDC errors measure held-out recovery rather than reuse of the
training corpus.

\paragraph{Protocol.}
For each of the 7 MAST-style frameworks we generate $n=400$
trajectories from a ground-truth absorbing chain whose transient
structure mimics the MAST taxonomy and emits noisy one-hot features
(Gaussian noise $\sigma=0.08$ and $\approx\!5\%$ censored traces). The
corpus is split \emph{before} featurization into
$n_{\mathrm{fit}}=200$ and $n_{\mathrm{test}}=200$. We run
\textsc{TraceToChain} on the fit half only.

Table~\ref{tab:heldout-ss9} reports recovered state count $m$, KS
agreement between the model success-time cumulative distribution
function (CDF)
$F_{\tau_\oplus \mid \tau_\oplus < \tau_\ominus}^{\hat M}$,
conditional on eventual success and sampled with $N{=}8{,}000$
trajectories, and the held-out success-time CDF
$\hat F_{\tau_\oplus}$ among successful traces. The table also reports
the sup-norm discrepancy
$L_\infty^{\mathrm{RDC}} =
\sup_{d\in[0,50]} |\mathcal R(d;\hat M) - \hat{\mathcal R}_{\mathrm{emp}}(d)|$.
These metrics are all first-passage checks: they ask whether the
fitted chain predicts when success occurs on traces it did not fit.

\paragraph{Empirical baseline.}
The held-out empirical RDC $\hat{\mathcal R}_{\mathrm{emp}}(d)$ is the
non-parametric baseline for observed horizons. The fitted chain is
useful only when it tracks that curve and passes GoF, and it adds
state-level interpretation, uncertainty propagation, perturbation
analysis, metric reconciliation, and horizon queries beyond the
observed grid.

% Auto-generated by SS9_heldout_empirical.py
\begin{table}[t]
\centering
\caption{Held-out validation on seven MAST-style trace corpora.}
\label{tab:heldout-ss9}
\small
\begin{tabular}{l c c c c c c}
\toprule
\textbf{Framework} & \textbf{$n_{\mathrm{fit}}$} & \textbf{$n_{\mathrm{test}}$} & \textbf{$m$} & \textbf{$D_{\mathrm{KS}}$} & \textbf{$p_{\mathrm{KS}}$} & \textbf{$L_\infty^{\mathrm{RDC}}$} \\
\midrule
react~\cite{yao2023react} & 200 & 200 & 5 & 0.017 & 1.000 & 0.048 \\
\tstriped
reflexion~\cite{shinn2023reflexion} & 200 & 200 & 5 & 0.031 & 0.992 & 0.052 \\
cot\_agent~\cite{wang2024agentsurvey} & 200 & 200 & 5 & 0.024 & 1.000 & 0.018 \\
\tstriped
toolformer~\cite{schick2023toolformer} & 200 & 200 & 5 & 0.024 & 1.000 & 0.052 \\
babyagi~\cite{wang2024agentsurvey} & 200 & 200 & 6 & 0.047 & 0.776 & 0.053 \\
\tstriped
autogpt~\cite{wang2024agentsurvey} & 200 & 200 & 6 & 0.032 & 0.987 & 0.039 \\
agentbench~\cite{agentbench2024} & 200 & 200 & 5 & 0.032 & 0.988 & 0.021 \\
\bottomrule
\end{tabular}
\end{table}

\begin{figure}[!t]
  \centering
  \includegraphics[width=\columnwidth]{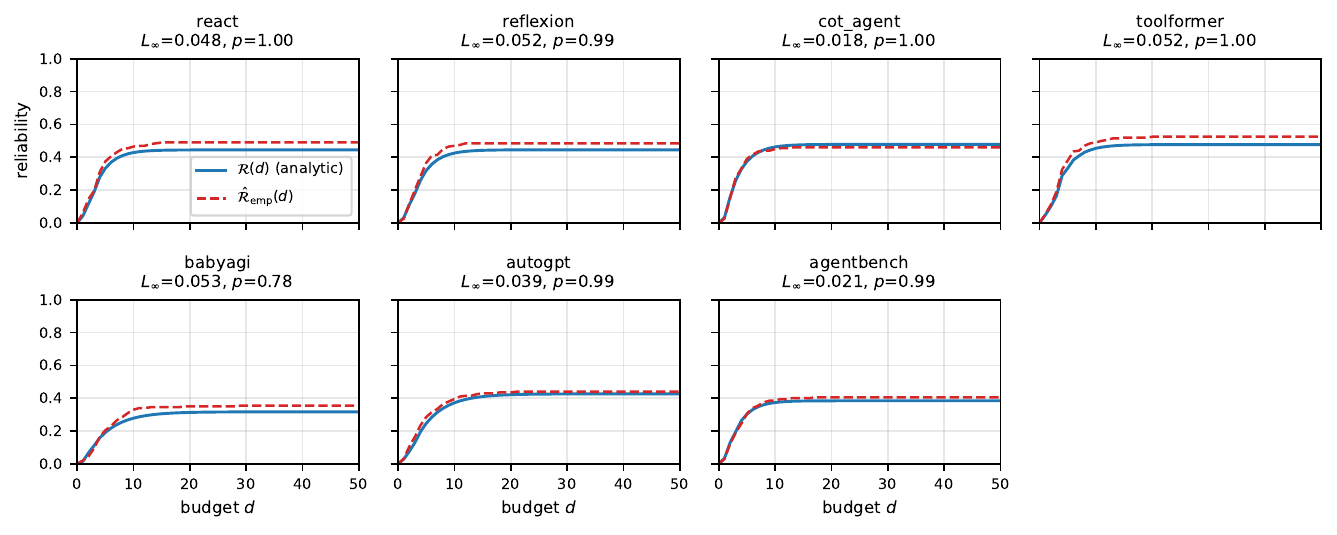}
  \caption{SS9 tests held-out recovery on controlled MAST-style traces:
           empirical RDC
           $\hat{\mathcal R}_{\mathrm{emp}}(d)$ (dashed) over analytic
           $\mathcal R(d)$ from the fitted chain (solid). Per-panel
           titles report $L_\infty^{\mathrm{RDC}}$ and $p_{\mathrm{KS}}$.}
  \label{fig:rdc_overlay}
\end{figure}

\paragraph{Findings.}
Across all 7 frameworks, the composite diagnostic passes on held-out
data at $\alpha=0.05$. KS distances are small
($D_{\mathrm{KS}}\in[0.017,0.047]$), the minimum KS $p$-value is
$0.776$, and $\max L_\infty^{\mathrm{RDC}}=0.053$ with median $0.048$.
BabyAGI has the largest RDC discrepancy yet still passes KS. This
supports recovery of the success first-passage distribution and RDC
for these controlled MAST-style traces.

\paragraph{Scope.}
SS9 uses synthetic traces from a genuine absorbing chain with noisy
emissions, so the held-out KS test probes controlled recovery by the
trace-to-chain pipeline:
featurization, clustering, Markov-order testing, and MLE. It does not
prove that operational agent traces are Markov. The MAST release lacks
the step-level features required by $\phi$, and SWE-bench /
$\tau$-bench trajectories require stable step-feature definitions
before the same held-out split can be applied.

\subsection{Cross-Benchmark Generalization}
\label{sec:cross_benchmark}

Agent evaluation spans diverse environments. To show that the
first-passage vocabulary is not tied to one taxonomy, we construct
synthetic archetypes for three benchmark shapes:
SWE-bench (software engineering)~\cite{swebench2024}, $\tau$-bench
(conversational tool-use)~\cite{taub2024}, and AgentBench
(multi-environment step-based reasoning)~\cite{agentbench2024}.
This is a portability exercise for the vocabulary, not empirical
validation on raw benchmark trajectories.

Table~\ref{tab:archetypes} formalizes the mapping from
benchmark-specific behaviors to the Markov state space $\ST$.

\begin{table}[!t]
\centering
\caption{Taxonomy Mapping for Cross-Benchmark Archetypes.}
\label{tab:archetypes}
\begin{tabular}{>{\raggedright\arraybackslash}p{0.22\columnwidth}
                >{\raggedright\arraybackslash}p{0.66\columnwidth}}
\toprule
\textbf{Benchmark} & \textbf{State Taxonomy ($\ST$)} \\
\midrule
SWE-bench~\cite{swebench2024} & \texttt{repo\_setup}, \texttt{issue\_read}, \texttt{search}, \texttt{edit\_file}, \texttt{test\_run} \\
\tstriped
$\tau$-bench~\cite{taub2024} & \texttt{user\_intent}, \texttt{api\_call}, \texttt{api\_resp}, \texttt{user\_clarify}, \texttt{confirm} \\
AgentBench~\cite{agentbench2024} & \texttt{think}, \texttt{act}, \texttt{observe} \\
\bottomrule
\end{tabular}
\end{table}

Using representative transition properties, we compute the same
analytic quantities. Figure~\ref{fig:cross_bench} shows characteristic
RDC behavior across the three archetypes.

\begin{figure}[!t]
  \centering
  \includegraphics[width=\columnwidth]{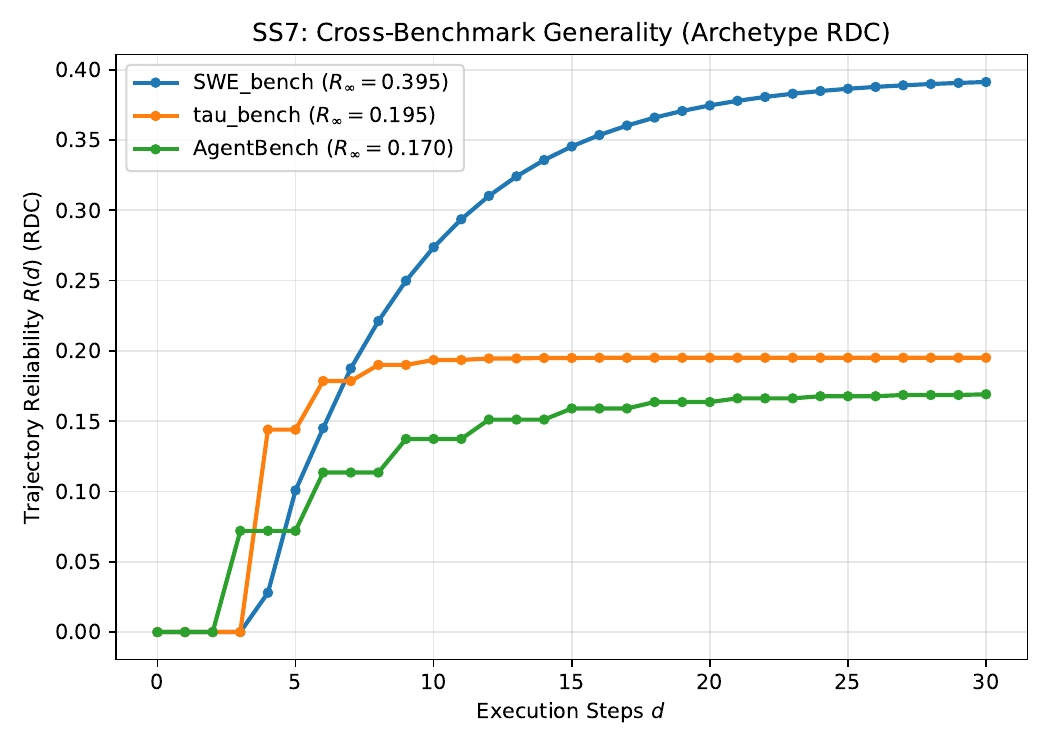}
  \caption{Cross-benchmark archetypes show vocabulary portability:
           RDCs for synthetic state spaces modeling SWE-bench,
           $\tau$-bench, and AgentBench.}
  \label{fig:cross_bench}
\end{figure}

These case studies support a narrower claim: when the diagnostics
accept the trace-to-chain abstraction, the fitted chain provides a
coherent reliability vocabulary for horizons, frameworks, and benchmark
shapes, with the limits set by the available trace representation and
the GoF diagnostics.

%% ================================================================

\section{Threats to Validity}
\label{sec:threats}

The main threat is overinterpreting the fitted abstraction. Diagnostics
and uncertainty reduce this risk, but the chain remains a reliability
model for specified trace features under specified diagnostics, not a
universal model of agent behavior. The estimates should therefore be
read conditionally: they depend on the state construction, the empirical
scope of the traces, the independence of repeated trials, the population
covered by the case studies, and the regime in which the NHPP limit is
used.

\paragraph{Construct validity (state construction).}
Mapping trajectories to a DTMC necessarily aggregates over memory, tool
state, and context, and no single mapping is canonical.
\S\ref{sec:state_construction} gives a reproducible construction whose
output is \emph{tested} by Algorithm~\ref{alg:gof}. Simulation study SS7
checks the protocol on known synthetic chains, and SS9 checks controlled
fit/test recovery. These studies establish statistical adequacy for the
chosen features, not uniqueness of $\phi$, independence of repeated
trials, or coverage of all raw benchmark distributions. If GoF rejects,
first-passage interpretation should stop until traces are segmented,
re-featurized, or modeled with a richer process. We therefore recommend
reporting $(\phi, m, p_{\mathrm{KS}}, \Delta_{\rm AIC})$ with each
estimate. Semi-Markov or hidden Markov models (HMMs) are natural next
steps when the DTMC abstraction is rejected.

\paragraph{Empirical scope of SS9.}
Simulation study SS9 uses synthetic traces from a true absorbing chain
with noisy one-hot emissions. The held-out KS test therefore evaluates
controlled recovery by the trace-to-chain pipeline
(featurizer + clustering + MLE), rather than the full modeling class on
unprocessed benchmark data. Applying the same split to
SWE-bench~\cite{swebench2024} and $\tau$-bench~\cite{taub2024} requires
step-level execution traces with stable feature definitions.

\paragraph{Independence assumption.}
The pass$^k$ and pass$@k$ identities require i.i.d.\ trials
(A\ref{A:iid}). This assumption can fail under temperature-0 sampling or
shared prefixes. Theorem~\ref{thm:corr} and Corollary~\ref{cor:diag}
quantify the resulting bias and provide a diagnostic.

\paragraph{External validity.}
The MAST case study covers 7 frameworks, so its conclusions may not
generalize to other agents, tasks, or trace distributions. For this
reason, SS6 is framed as illustrative rather than confirmatory.

\paragraph{NHPP-limit scope.}
The NHPP limit applies under rare-failure scaling
($\varepsilon\to 0$, $d\to\infty$). In high-failure regimes
($\varepsilon > 0.1$), the exact closed form from
Proposition~\ref{thm:closed} should be used instead.

Overall, the claims are conditional. The fitted chain is useful when the
trace representation is appropriate, diagnostics pass, and uncertainty
is acceptable for the deployment question. Otherwise, the fit should be
rejected or replaced by a richer model.

%% ================================================================

\section{Discussion}
\label{sec:disc}

For benchmark authors, SRE teams, and method developers, the fitted
chain provides a shared reliability object at the trace level. It
supports first-passage metrics with uncertainty, horizon and
perturbation queries, and diagnostics that can reject an inadequate
abstraction. Simulation study SS9 (Table~\ref{tab:heldout-ss9}) shows
that, once the trace-to-chain step is specified and checked, $\hat M$
tracks held-out first-passage behavior to within
$L_\infty^{\mathrm{RDC}}\le 0.053$ on controlled MAST-style trace
corpora for seven frameworks. This evidence does not replace
benchmarks or incident analysis. Rather, it shows how benchmark-style
trace corpora can support auditable reliability answers while also
identifying when the trace abstraction is too weak for those answers.
The practical requirement is to fit the chain, report uncertainty, run
GoF, and withhold first-passage conclusions when the checks fail.

Future work will extend the same audited first-passage view to
semi-Markov durations, HMM reliability under partial observability, and
online estimation of $Q$. We will also study integration with PRISM and
Storm and mixture-chain models for cross-trial
correlation~\cite{spaeh2024}.

%% ================================================================

\section{Conclusion}
\label{sec:conc}

LLM-agent teams already collect traces, but scalar metrics alone do not
answer deployment reliability questions. \textsc{TraceToChain} turns
those traces into an audited absorbing-chain model with
goodness-of-fit diagnostics and uncertainty. When the diagnostics
accept the trace abstraction, the fitted chain supports horizon
reliability, local perturbation analysis, and metric reconciliation.
The method combines a composite AIC$\,\wedge\,$KS certificate,
posterior and bootstrap intervals, and controlled MAST-style held-out
validation. On seven controlled MAST-style frameworks, analytic and
held-out empirical RDCs agree within
$\max L_\infty^{\mathrm{RDC}}=0.053$, and pass$@k$, pass$^k$, and the
RDC become projections of one distribution. The reliability claim is
therefore conditional but auditable: accepted trace abstractions give
deployment-facing first-passage answers, while rejected abstractions
should not be used for such conclusions. The main remaining step is to
apply the same split to SWE-bench and $\tau$-bench trajectories once
stable step-feature data are available.

%% ================================================================

\bibliographystyle{IEEEtran}
\bibliography{refs}

%% ----------------------------------------------------------------
%% [REDUCTION 2026-04-24] The Extended Proofs appendix is suppressed
%% from the 12-page body. All T1--T6 proofs and extended remarks are
%% retained in the reproducibility artifact under
%% \texttt{proofs/T1.tex}--\texttt{proofs/T6.tex}.
%% ----------------------------------------------------------------
\iffalse
\appendix
\section{Extended Proofs}
\label{app:proofs}

Full proofs of Propositions~\ref{thm:shape}--\ref{thm:mixing} are
provided in the companion artifact at
\texttt{proofs/T4.tex}--\texttt{proofs/T6.tex}. The main text
includes proof sketches only due to the 12-page limit.
\fi

\end{document}